\def\ve{\varepsilon}
\def\L{\Lambda}
\def\T{\Theta}
\def\l{\lambda}
\def\beq{\begin {equation}}
\def\eeq{\end {equation}}
\def\beqar{\begin {eqnarray*}}
\def\eeqar{\end {eqnarray*}}
\newcommand{\expect}{\mathbb{E}}
\newcommand{\real}{\mathbb{R}}
\newcommand{\prob}{\mathbb{P}}
\newcommand{\calG}{\mathcal{G}}
\newcommand{\calM}{\mathcal{M}}
\newcommand{\calP}{\mathcal{P}}
\newcommand{\calS}{\mathcal{S}}
\newcommand{\calC}{\mathcal{C}}
\newcommand{\calT}{\mathcal{T}}
\newcommand{\ie}{\emph{i.e.}}
\newtheorem{thm}{Theorem}
\newtheorem*{uthm}{Theorem}
\newtheorem{mydef}{Definition}
\newtheorem{conj}{Conjecture}
\newtheorem{cor}[thm]{Corollary}
\newtheorem{lem}[thm]{Lemma}
\author{
Pranav Dandekar\footnote{Department of Management Science \& Engineering, Stanford University. Email: \href{mailto:ppd@stanford.edu}{ppd@stanford.edu}.
 }
\and 
Ashish Goel\footnote{Departments of Management Science \& Engineering and (by courtesy) Computer Science, Stanford University. Email: \href{mailto:ashishg@stanford.edu}{ashishg@stanford.edu}. 
 }
\and 
Ramesh Govindan\footnote{Computer Science Department, University of Southern California. Email: \href{mailto:ramesh@usc.edu}{ramesh@usc.edu}. 
}
\and
Ian Post\footnote{Computer Science Department, Stanford University. Email: \href{mailto:itp@stanford.edu}{itp@stanford.edu}.
 }
}
\begin{document}
\title{Liquidity in Credit Networks: A Little Trust Goes a Long Way}

\vspace{-0.5in}
\date{}
\maketitle

\begin{abstract}
Credit networks represent a way of modeling trust between entities in a network. Nodes in the network print their own currency and trust each other for a certain amount of each other's currency. This allows the network to serve as a decentralized payment infrastructure---arbitrary payments can be routed through the network by passing IOUs between trusting nodes in their respective currencies---and obviates the need for a common currency. Nodes can repeatedly transact with each other and pay for the transaction using trusted currency. A natural question to ask in this setting is: how long can the network sustain liquidity, \ie, how long can the network support the routing of payments before credit dries up? We answer this question in terms of the long term failure probability of transactions for various network topologies and credit values.

We prove that the transaction failure probability is independent of the path along which transactions are routed. We show that under symmetric transaction rates, the transaction failure probability in a number of well-known graph families goes to zero as the size,  density or credit capacity of the network increases. We also show via simulations that even networks of small size and credit capacity can route transactions with high probability if they are well-connected. Further, we characterize a centralized currency system as a special type of a star network (one where edges to the root have infinite credit capacity, and transactions occur only between leaf nodes) and compute the steady-state transaction failure probability in a centralized system. We show that liquidity in star networks, complete graphs and Erd\"{o}s-R\'{e}nyi networks is comparable to that in equivalent centralized currency systems; thus we do not lose much liquidity in return for their robustness and decentralized properties.
\end{abstract}

\section{Introduction}
One of the primary functions of money is as a medium of exchange. This function is facilitated by governments and central banks that issue currency and declare it to be legal tender, \ie, the state promises to accept that currency back as payment for goods and services. Thus currency represents an obligation issued by the state, and when used to make a payment, results in a transfer of the state's obligation from the payer to the payee. A decision to accept payment in a currency is therefore a decision to trust the issuer of the currency to fulfill its obligations.

The modern banking system is a centralized currency infrastructure. The central bank sits at the root of the tree. It issues currency, government notes, etc. Retail banks, individuals and businesses form the leaves of this tree. Since they cannot print their own currency, they trust the central bank for an infinite amount of money (in theory, at least). Many non-governmental organizations also  issue their own currencies, known as \emph{scrip}. Examples of such scrip systems in the online world include Facebook credits, Yootles \cite{yootles}, P2P systems where such currency is used to solve the free-rider problem \cite{karma, aperjis:p2p-exchange},  and virtual worlds such as Everquest and Second Life that have their own currencies. A challenge with scrip systems is to determine how much currency to issue in order to ensure liquidity without devaluing the currency \cite{kash:opt-scrip}.

We study an alternate model of credit, credit networks, introduced independently by DeFigueiredo and Barr \cite{trustdavis}, by Ghosh et al. \cite{ghosh-et-al:mech-design-trust-networks}, and by Karlan et. al. \cite{mobius-szeidl:collateral}. We generalize the model based on the insight that payments can be made in any currency as long as it is trusted by the payee. So nodes in our model act as banks and print their own currency and trust each other for certain amounts of each other's currency. This allows them to transact with each other without the need for a common currency: as long as there is a chain of sufficient credit from the payee to the payer, arbitrary payments can be made by passing obligations between trusting agents in their respective currencies. We study the question of long-term liquidity (\ie, capacity to route payments) in credit networks when nodes repeatedly transact with each other.

{\bf Illustrative Example.} Consider a credit network with three nodes $u, v$ and $w$ with edge $(u,v)$ having credit limit $c_1$ (\ie, $u$ trusts $v$ for up to $c_1$ units of $v$'s currency) and edge $(v,w)$ having a credit limit $c_2$. If $w$ wants to purchase a product or service from $u$ worth $p$ units of $w$'s currency, where $p \le \min\{c_1, c_2\}$, the transaction proceeds by $w$ issuing an IOU to $v$ worth $p$ units of $w$'s currency and $v$ issuing an IOU to $u$ worth $p$ units of $v$'s currency. However, if $p > \min\{c_1, c_2\}$ the transaction fails. As a result of a successful transaction, the edges $(u,v)$ and $(v,w)$ now have capacities $c_1 - p$ and $c_2 - p$ respectively (since $v$ and $w$ have depleted part of their credit line). But this transaction also results in the creation of edges $(v,u)$ and $(w,v)$ each having capacity $p$, since now $v$ can trust $u$ for up to $p$ units of its own currency, and $w$ can trust $v$ for up to $p$ units of its own currency. Thus routing payments in credit networks is identical to routing residual flows in general flow networks.
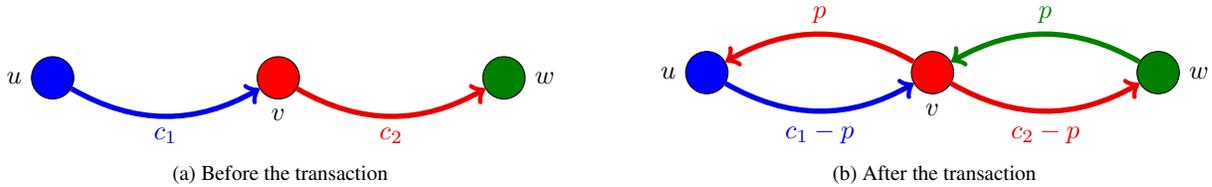
\begin{figure}[h]
\centering
\subfloat[Before the transaction]{
\begin{tikzpicture}
\tikzstyle{vertex}=[draw, circle,minimum size=16pt,inner sep=0pt]
\draw (0,0) node[vertex, fill=blue, label=left:$u$] (u) {};
\draw (3,0) node[vertex, fill=red, label=below:$v$] (v) {};
\draw (6,0) node[vertex, fill=green!50!black, label=right:$w$] (w) {};

\draw (u) edge[->, line width=2pt, bend right, color=blue!90!black] node[label=below:$c_1$] {} (v);
\draw (v) edge[->, line width=2pt, bend right, color=red!90!black] node[label=below:$c_2$] {} (w);
\end{tikzpicture}
} \hspace{1cm}
\subfloat[After the transaction]{
\begin{tikzpicture}
\tikzstyle{vertex}=[draw, circle,minimum size=16pt,inner sep=0pt]
\draw (0,0) node[vertex, fill=blue, label=left:$u$] (u) {};
\draw (3,0) node[vertex, fill=red, label=below:$v$] (v) {};
\draw (6,0) node[vertex, fill=green!50!black, label=right:$w$] (w) {};

\draw (u) edge[->, line width=2pt, bend right, color=blue!90!black] node[label=below:$c_1-p$] {} (v);
\draw (v) edge[->, line width=2pt, bend right, color=red!90!black] node[label=below:$c_2-p$] {} (w);
\draw (v) edge[->, line width=2pt, bend right, color=red!90!black] node[label=above:$p$] {} (u);
\draw (w) edge[->, line width=2pt, bend right, color=green!50!black] node[label= above:$p$] {} (v);
\end{tikzpicture}
} 
\caption{Illustrative Example}
\label{fig:illust-example}
\end{figure}

The papers that introduced this model used it as a generalization of budget constraints in auctions \cite{ghosh-et-al:mech-design-trust-networks} and as a mechanism for providing social collateral/insurance for transactions between untrusting agents in a network \cite{trustdavis, mobius-szeidl:collateral}. In addition to these papers, Resnick and Sami \cite{resnick-sami:transitive-trust} study how to update credit values along edges over time while ensuring robustness to Sybil attacks. They show that there is a social cost to allowing transactions via intermediaries---any protocol for updating credit values that is Sybil-proof in a certain sense can lead to a reduction in overall credit balances in expectation. Previous work on this model used the term ``trust network" for it; we prefer the term ``credit network" to distinguish this model from all the previous ones that also model trust/reputation in networks in settings unrelated to ours such as web search \cite{page-brin:pagerank, gyongyi:trustrank} and studying people's opinions of each other \cite{guha:trust, leskovec:signed-networks}.

Credit networks have a number of interesting and useful properties. First, the loss incurred by nodes in the network is \emph{bounded}. Consider a scenario where a potential attacker, say node $x$, injects nodes into this network. The attacker's nodes can extend each other credit lines worth arbitrary amount of money, but in order to transact with one of the honest players the attacker $x$ must dupe one of them, say $u$, into extending him a line of credit. If $u$ extends $x$ a credit line for $c$ units, $x$ can transact with honest nodes in the network for up to $c$ units by giving $u$ IOUs worth $c$ units  in its currency. Once the transaction is complete, $x$  can then disappear leaving $u$ with unredeemable IOUs worth $c$ units, and $u$ incurs a loss equal to $c$ units in $x$'s  currency. This demonstrates that the total loss incurred by the nodes in the network is bounded by the total credit extended by honest nodes to the attacker's nodes. In particular, the loss incurred by the nodes in the network does not depend on the number of nodes the attacker injects. Second, since a node in the network only accepts IOUs from other nodes to whom it has extended a line of credit, the loss incurred by honest players in the network is \emph{localized}. In other words, the only nodes that incur a loss are the ones that extended credit to the attacker nodes. In this sense, the system is fair; it protects a node from losses resulting from another node's lack of judgement. Finally, routing payments in credit networks is efficient since it only requires a max-flow computation \cite{trustdavis, ghosh-et-al:mech-design-trust-networks}.

These properties make this model useful not only for monetary transactions, but also in any setting where there is a need to model trust between nodes in a network. It is particularly well-suited for transactions in exchange economies such as P2P networks where it can be used to improve inefficiencies resulting from asynchronous demand and bilateral trading \cite{liu:trading}. It has been used as part of a system called Ostra \cite{mislove:ostra} designed to thwart spam in email. Ostra is similar in spirit to Gmail's Priority Inbox. This model can also be used in settings such as combating spam in viral marketing over social networks and packet routing in mobile ad-hoc networks. There is a large body of work in economics and sociology on social capital and favor exchanges in networks \cite{jackson:favor-exchange}. This model provides a rigorous way of not only keeping track of favors owed to and by each individual in a network, but also a way to exchange favors, via trusted intermediaries, between individuals that do not know each other directly.

However, in order for the model to be of practical use it should be able to support repeated transactions between nodes over a long period of time. This motivates the following question, which we formulate and study in this paper: if the network is sufficiently well-connected and has sufficient credit to begin with, can we sustain transactions in perpetuity without additional injection of credit? How does liquidity depend upon network topology and transaction rates between nodes, and how does it compare with the centralized model described above?

We show positive results for all of the above questions. We show that the failure probability of transactions is independent of the path used to route flow between nodes. For symmetric transaction rates, we show analytically and via simulations that the failure probability for complete graphs, Erd\"{o}s-R\'{e}nyi graphs, and preferential attachment graphs goes to zero as the size, the density or the credit capacity of the network increases. Further, we characterize a centralized currency system as a special type of a star network (one where edges to the root have infinite credit capacity, and transactions occur only between leaf nodes) and compute its steady-state failure probability. We show how to construct a centralized system that is equivalent to a given credit network and that the steady-state failure probability in star networks and complete graphs is within a constant-factor of that in equivalent centralized currency systems. We show that the node bankruptcy probability in credit networks is $\T(1/h_G)$ whereas that in an equivalent centralized system is $\T(1/\bar{c}_G)$, where $h_G$ is the harmonic mean of the total credit capacities on edges incident upon nodes in $G$ and $\bar{c}_G$ is the corresponding arithmetic mean. These results are similar in spirit to price of anarchy bounds---in return for all the benefits resulting from the decentralized nature of the system, we do not lose much liquidity compared to a centralized model with a common currency.

We would like to qualify our results by stating that nodes in our model are not endowed with any rationality. We treat credit values on edges and transaction rates between nodes as being exogenously defined. We also ignore currency exchange ratios, \ie, we assume them to be unity for clarity of exposition; this does not qualitatively affect our results. These questions represent promising directions for further study of this model and we discuss them at the end of the paper along with other open problems.

The rest of the paper is organized as follows: Section \ref{sec:model} defines our model and describes our main results, Section \ref{sec:analysis} contains steady-state analysis of the transaction success probability in our model and its comparison with a centralized model of credit,  Section \ref{sec:sim} discusses results of simulating repeated transactions on two random graph families, Erd\"{o}s-R\'{e}nyi graphs and power-law graphs constructed using the  Barab\'{a}si-Albert preferential attachment model \cite{barabasi-albert:pa}, and finally Section \ref{sec:conclusion} has some concluding remarks along with a discussion of a number of interesting directions for further study of this model.

\section{Our Model \& Results}\label{sec:model}
A credit network $G  = (V,E)$ is a directed graph with $n$ nodes and $m$ edges. Nodes represent entities or agents. Edges represent pairwise credit limits between agents. An edge $(u,v)\in E$ has capacity $c_{uv} > 0$, which implies that $u$ has extended a credit line of $c_{uv}$ units to $v$ in $v$'s currency. If a node $s$ needs to pay $p$ units in her currency to node $t$ (for example, to buy a good that $t$ is selling), the payment can go through if the maximum credit flow from $t$ to $s$ is at least $p$ units. The payment will get routed through a chain of trusted nodes from $s$ to $t$. Note that payment flows in the opposite direction of credit, so a payment merely results in a redistribution of credit; the total credit in the network remains unchanged.

We study liquidity under the following simple model of repeated transactions: we assume that each edge $(u,v)$ in $G$ has total credit capacity $c_{uv} + c_{vu} = c$ where $c_{uv}, c_{vu}$ are integers. The transaction rate between nodes is given by the $n\times n$ matrix $\L = \{\l_{uv} : u, v\in V; \l_{uu} = 0\}$ such that $\sum_{u,v} \l_{uv} = 1$. At each time step, we pick a pair of nodes $(s,t)$ with probability $\l_{st}$ and try to route a unit payment along the shortest feasible path from $s$ to $t$. If such a path exists, we route the flow along that path and modify edge capacities along the path as described previously (\ie, for each edge $(u,v)$ on the path, increment $c_{uv}$ by 1 and decrement $c_{vu}$ by 1). The transaction fails if there is no path between $s$ and $t$.

We begin by noting that at any time step an edge between nodes $u$ and $v$ can be in one of $(c+1)$ states, having capacity $c$ from $u$ to $v$ and 0 from $v$ to $u$, having capacity $c-1$ from $u$ to $v$ and 1 from $v$ to $u$ and so on down to having 0 capacity from $u$ to $v$ and $c$ from $v$ to $u$. Each successful transaction results in a change of state for all edges along the shortest path from the source to the destination. Therefore, these repeated transactions between nodes in $G$ induce the following Markov chain $\calM(G, \L)$: each state of the Markov chain encapsulates the state of all the edges in the network. Since there are $m$ edges, each of which can be in one of $(c+1)$ states, the Markov chain has $(c+1)^m$ states. A successful transaction results in a transition between two states. Let $P$ be the transition matrix. The transition probability, $P(\calS_1, \calS_2)$ from state $\calS_1$ to $\calS_2$ is equal to $\l_{st}$ where routing a unit payment from $s$ to $t$ in state $\calS_1$ along the shortest feasible path results in the network being in state $\calS_2$. The probability $P(\calS, \calS)$ is the probability that the transaction is infeasible when the network is in state $\calS$. This is equal to the sum of all probabilities $\l_{st}$ such that the network in state $\calS$ has no path from $s$ to $t$.

A note on terminology: we will sometimes refer to an edge $(u,v)$ as being ``bidirectional" if both $c_{uv}$ and $c_{vu}$ are non-zero, and as ``unidirectional" if exactly one of $c_{uv}$ and $c_{vu}$ is zero.

We describe our main results below. Our first result obviates the need to route payments along the shortest feasible path.
\begin{uthm}
Let $(s_1, t_1), (s_2, t_2), \dotsc, (s_T, t_T)$ be the set of transactions of value $v_1, v_2, \dotsc, v_T$ respectively that succeed when the payment from $s_i$ to $t_i$ is routed along a path $\calP_i$. Then the same set of transactions succeed when the payment from $s_i$ to $t_i$ is routed along any other feasible path $\calP'_i$.
\end{uthm}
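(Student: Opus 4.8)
The plan is to exhibit a compact \emph{invariant} of the network state that is (i) left unchanged when an individual transaction is rerouted along a different feasible path, and (ii) rich enough to decide the feasibility of every future transaction. Granting such an invariant, the theorem follows by induction on the number of transactions processed: if the first $i-1$ transactions have the same success/failure pattern under the two routings $\calP$ and $\calP'$, then the two states reached carry the same invariant, so transaction $i$ is feasible in one state iff it is feasible in the other, and hence succeeds in one routing iff it succeeds in the other; a failed transaction changes neither the state nor the invariant, so the induction proceeds. The base case is that both routings start from the same network.

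For a state $\calS = \{c_{uv}\}$ and a vertex set $A \subseteq V$, take as invariant the \emph{cut function}
\[
  g_{\calS}(A) \;=\; \sum_{u \notin A,\ v \in A} c_{uv},
\]
the total credit on edges pointing from outside $A$ into $A$. Two facts are needed. First, a min-cut characterization of feasibility: since a payment from $s$ to $t$ consumes, on each hop $x\to y$ of the payment chain, precisely the credit $c_{yx}$ that $y$ extended to $x$ (as in the illustrative example), a payment of $p$ units from $s$ to $t$ is feasible in $\calS$ iff $g_{\calS}(A) \ge p$ for every $A$ with $s \in A$ and $t \notin A$; this is max-flow/min-cut applied to the digraph whose arc $x \to y$ has capacity $c_{yx}$. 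Thus the family $\{g_{\calS}(A) : A \subseteq V\}$ determines the feasibility of \emph{every} transaction, and it remains only to show that rerouting preserves this family.

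The crux is the second fact. Routing a payment of $q$ units from $s'$ to $t'$ along a feasible path $w_0 = s', w_1, \dots, w_k = t'$ increments $c_{w_j w_{j+1}}$ and decrements $c_{w_{j+1} w_j}$ by $q$ on each hop, so for a fixed $A$, writing $a_j = \mathbf{1}[w_j \in A]$, the change in $g_{\calS}(A)$ equals
\[
  q\bigl(\#\{j : a_j = 0,\, a_{j+1} = 1\} - \#\{j : a_j = 1,\, a_{j+1} = 0\}\bigr) \;=\; q\,(a_k - a_0),
\]
because in the binary string $a_0 a_1 \cdots a_k$ the number of $0\!\to\!1$ steps minus the number of $1\!\to\!0$ steps telescopes to $a_k - a_0$. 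Hence $g_{\calS}(\cdot)$ is replaced by $g_{\calS}(\cdot) + q\bigl(\mathbf{1}[t' \in \cdot] - \mathbf{1}[s' \in \cdot]\bigr)$, which depends only on $s', t', q$ and not on the path taken. Feeding this into the induction above completes the proof.

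The step I expect to require the most care is fixing the orientation in the flow/cut duality --- payments travel opposite to credit, so the cut function must count credit on arcs \emph{entering} the source side of a cut --- together with the bookkeeping of exactly how each of the $2k$ altered entries $c_{w_j w_{j+1}}, c_{w_{j+1} w_j}$ contributes to a given $g_{\calS}(A)$; the telescoping identity itself is elementary. One further remark: for unit transactions (the model of the paper) a ``feasible path'' is just any path through positive-credit arcs and feasibility coincides with max-flow $\ge 1$, so the statement is exactly as above; for a transaction carrying $v_i > 1$ units, viewing its routing through a path/cycle decomposition (each cycle contributing $0$ to every $g_{\calS}(A)$) shows the identical computation of the change in $g_{\calS}$ still holds, so nothing in the argument changes.
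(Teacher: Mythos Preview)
Your argument is correct and follows the same inductive skeleton as the paper's proof, but the invariant you choose is different and the route is more self-contained. The paper observes that a successful transaction $(s,t,v)$ changes only the \emph{generalized score vector} entries of $s$ and $t$, regardless of the path taken; it then invokes Lemma~\ref{lem:equiv-class-score-vectors} (cited from \cite{gioan:cycle-cocycle}) to conclude that the two resulting states are cycle-reachable, and Lemma~\ref{lem:feasibility} to conclude that cycle-reachable states share the same feasible transactions. You bypass both lemmas by tracking the full cut function $g_{\calS}(\cdot)$ and appealing directly to max-flow/min-cut: your telescoping computation shows that every $g_{\calS}(A)$ changes by an amount depending only on $(s,t,v)$, and min-cut then certifies equal feasibility. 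In effect, your argument \emph{contains} a proof of Lemma~\ref{lem:feasibility} (cycles leave every $g_{\calS}(A)$ unchanged, and $g_{\calS}$ determines feasibility), so it is strictly more elementary --- no external structural result about orientations is needed. The paper's approach, on the other hand, introduces the cycle-reachability equivalence relation that it reuses heavily in the steady-state analysis (Theorem~\ref{thm:main-ss} and the forest-counting arguments), so its detour through score vectors is doing double duty. Both invariants carry the same information, since the singleton values $g_{\calS}(\{v\})$ recover the score vector (up to the fixed edge totals), and conversely equal score vectors imply cycle-reachability and hence, by your own telescoping, equal cut functions.
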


In order to prove this, we observe that routing flow along a directed cycle changes the state of the network but does not affect the total credit available to any node.  This motivates the following definition:
\begin{mydef}
Given a credit network $G$, let $\calS$ and $\calS'$ be two states of $G$. We say that $\calS'$ is \emph{cycle-reachable} from $\calS$ if the network can be transformed from state $\calS$ to state $\calS'$ by routing a sequence of payments along feasible cycles (\ie, from a node to itself along a feasible path).
\end{mydef}
Cycle-reachability defines a partition $\calC$ over the set of possible states of the network; each equivalence class in $\calC$ is the set of states that are cycle-reachable from each other. We show the following result about the steady-state distribution of $\calM$ over equivalence classes in $\calC$:
\begin{uthm}
Consider a Markov chain $\calM_{\calS_0}(G,\L)$ starting in state $\calS_0$ induced by a symmetric transaction rate matrix $\L$ over nodes in $G$. Let $\calC_{\calS_0} \subseteq \calC$ be the set of equivalence classes accessible from $\calS_0$ under the regime defined by $\L$. Then $\calM_{\calS_0}$ has a uniform steady-state distribution over $\calC_{\calS_0}$.
\end{uthm}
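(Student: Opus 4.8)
The plan is to lump the chain $\calM$ by the partition $\calC$ and show that the resulting chain on $\calC_{\calS_0}$ is finite, irreducible, and has a symmetric transition matrix, so that its unique stationary distribution is uniform.

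The first step is to pin down what an equivalence class of $\calC$ records. For a state $\calS$ and a node $v$, let $d_v(\calS)=\sum_{u:(v,u)\in E}c_{vu}$ be the total credit currently extended by $v$. Routing a unit payment around a feasible cycle changes individual edge states but leaves every $d_v$ fixed, so the vector $d(\calS)=(d_v(\calS))_{v\in V}$ is constant on each class of $\calC$; conversely, any two states with the same $d$-vector are cycle-reachable from each other. For the converse I would argue that their coordinatewise difference $g=\calS'-\calS$ is an integer circulation, and remove it one unit of flow at a time along a directed cycle lying in the support of the current residual circulation, using the elementary fact that $g_{ab}\ge 1$ forces $c_{ab}\le c-1$, hence $c_{ba}\ge 1$, so each such cycle is a \emph{feasible} cycle payment and the same inequality remains available to continue the induction. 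Thus the classes of $\calC$ are precisely the level sets of $\calS\mapsto d(\calS)$, and a \emph{successful} transaction $(s,t)$ carries the class with invariant $\mathbf{x}$ to the class with invariant $\mathbf{x}+e_s-e_t$ (one checks directly that $d_s$ rises by $1$, $d_t$ falls by $1$, and every other coordinate is unchanged).

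The second step is to show that whether $(s,t)$ succeeds is itself a function of the class. The payment $s\to t$ fails exactly when some set $S$ with $t\in S$, $s\notin S$ has zero residual capacity on every edge leaving $S$; since $\sum_{v\in S}d_v(\calS)=c\,|E(S)|+\bigl(\text{total capacity on edges leaving }S\bigr)$, where $|E(S)|$ is the number of (undirected) edges inside $S$, this happens iff $\sum_{v\in S}d_v(\calS)=c\,|E(S)|$, a condition on $d(\calS)$ alone. Together with the first step this yields strong lumpability of $\calM$ with respect to $\calC$: from any state of a class $A$, the probability of moving to a class $B$ is $\l_{st}$ when $B$'s invariant equals $A$'s plus $e_s-e_t$ and $(s,t)$ is feasible on $A$, equals $\sum_{(s,t)\text{ infeasible on }A}\l_{st}$ when $B=A$, and is $0$ otherwise --- in every case independent of the chosen state of $A$. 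Let $\bar\calM$ be the lumped chain on the classes of $\calC_{\calS_0}$.

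The last step is to establish that $\bar\calM$ has a symmetric transition matrix and is irreducible on $\calC_{\calS_0}$. The key point is a reversal lemma: if $(s,t)$ is feasible in $\calS$ and yields $\calS'$, then $(t,s)$ is feasible in $\calS'$, because routing $s\to t$ along a path augments the forward capacity on every edge of that path, which is exactly what the reversed path needs to carry $t\to s$. Hence $(s,t)$ is feasible on a class $A$ iff $(t,s)$ is feasible on the class $A+e_s-e_t$, and since $\L$ is symmetric we get $\bar P(A,B)=\bar P(B,A)$ for all $A,B$; the same lemma makes accessibility between classes symmetric, so $\calC_{\calS_0}$ forms a single communicating class of $\bar\calM$. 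A finite irreducible Markov chain with a symmetric (hence doubly stochastic) transition matrix has the uniform distribution as its unique stationary distribution, and pulling this back through the lumping shows the steady-state distribution of $\calM_{\calS_0}$ over $\calC_{\calS_0}$ is uniform. I expect the crux to be the first two steps together --- recognizing that the single vector $d(\calS)$ of per-node extended credit controls both the cycle-reachability class and which transactions are feasible (the latter via the cut argument); once that is in place, the lumpability, the reversal lemma, and the symmetry and irreducibility of $\bar\calM$ reduce to bookkeeping.
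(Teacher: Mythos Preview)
Your proposal is correct and follows essentially the same route as the paper: identify the equivalence classes with generalized score vectors (the paper's Lemma~\ref{lem:equiv-class-score-vectors}), show feasibility is a class invariant (the paper's Lemma~\ref{lem:feasibility}), use the reversal observation (the paper's Corollary~\ref{cor:reciprocal}) together with the symmetry of $\L$ to conclude that the lumped transition matrix on $\calC_{\calS_0}$ is symmetric, and deduce the uniform stationary law. The only notable difference is that you supply self-contained arguments for the two preliminary lemmas---the circulation-decomposition for cycle-reachability and the cut identity $\sum_{v\in S}d_v(\calS)=c\,|E(S)|+\text{(capacity leaving }S)$ for feasibility---whereas the paper states these separately (citing \cite{gioan:cycle-cocycle} for the first) and keeps the theorem's own proof to the final symmetry step; you also make the lumpability and irreducibility explicit, which the paper leaves implicit.
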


We use this result to bound the steady-state node bankruptcy probability in general graphs and transaction failure probability for trees, cycles and complete graphs. A node being bankrupt is a sufficient but not necessary condition for a transaction with that node as the payer to fail; the necessary and sufficient condition for a transaction to fail is that the network has a directed cut of zero capacity between the payer and the payee. Let $d_v$ be the total credit capacity of all edges incident upon node $v$ in a graph $G$. Let $h_G$ be the harmonic mean of the $d_v$s and $\bar{d}_G$ be the arithmetic mean. We show that if $\calM(G,\L)$ is ergodic and $\L$ is symmetric, the average steady-state bankruptcy probability of a node in a credit network  is $\T(1/h_G)$ whereas that in an equivalent centralized system is $\T(1/\bar{d}_G)$.  We show that if $\L$ is uniform, the steady-state transaction failure probability in star networks is $\Theta(1/c)$, the steady-state success probability for line graphs is $\Theta(c/n^2)$ and that for cycles is $\Theta(c/n)$. We also show that if $\L$ is symmetric, the steady-state transaction failure probability for complete graphs is $\Theta(1/nc)$. Our results imply that for ``thin" graphs such as lines and cycles, the steady-state failure probability goes to one with the size of the network, whereas for well-connected graphs such as star networks, complete graphs and Erd\"{o}s-R\'{e}nyi networks\footnotemark[1], the steady-state failure probability goes to zero with the size, the average node degree or the credit capacity in the network.
\footnotetext[1]{For a connected Erd\"{o}s-R\'{e}nyi network, $G_c(n,p)$, the node bankruptcy probability is $\T(1/(npc))$.}

 \begin{table}
\centering
\begin{tabular}{| c | c | c | c | c |}
\hline
Network Topology & Transaction Regime & Credit Network & Centralized System\\
\hline
\hline
Star & uniform  & $\Theta(1/c)$ & $\Theta(1/c)$\\
Line & uniform & 1 - $\Theta(c/n^2)$ & $\Theta(1/c)$\\
Cycle & uniform & 1 - $\Theta(c/n)$ & $\Theta(1/c)$\\
Complete & symmetric  & $\Theta(1/nc)$ & $\Theta(1/nc)$\\
Erd\"{o}s-R\'{e}nyi\footnotemark[2]: $G_c(n,p)$ & symmetric  & $\Theta(1/npc)$  & $\Theta(1/npc)$\\
Barab\'{a}si-Albert\footnotemark[3]: $G^{BA}_c(n,d)$& symmetric & $\T(1/dc)$ & $\T(1/dc)$\\
\hline
\end{tabular}
\caption{Failure probability comparison between credit networks and equivalent centralized currency systems}
\label{table:liquidity-comp}
\end{table}
\footnotetext[2]{We conjecture based on simulations and heuristic calculations that the transaction failure probability is $\T(1/(npc))$.}
\footnotetext[3]{We conjecture based on simulations that the failure probability is $\T(1/(dc))$.}
Further, we  characterize a centralized currency system as a special type of a tree network (one where edges to the root have infinite credit capacity, and transactions occur only between leaf nodes) and compute its steady-state failure probability. We show how to construct a centralized system that is equivalent to a given credit network and that the steady-state failure probability in star networks and complete graphs is within a constant-factor of that in equivalent centralized currency systems. 

Table~\ref{table:liquidity-comp} shows our main analytical results on liquidity for a number of network topologies and their comparison with equivalent centralized currency systems. 

In addition to these analytical results, we simulated repeated transactions on two random graph families: Erd\"{o}s-R\'{e}nyi graphs and power-law graphs generated using the Barab\'{a}si-Albert preferential attachment (BA) model \cite{barabasi-albert:pa}. The simulations showed that relatively small values of density and credit capacity were sufficient to attain a transaction success probability in excess of 0.9. We also demonstrate that for both topologies, if we hold the average node degree and credit capacity on edges constant, the size of the network (\ie, number of nodes) had no effect on the steady-state success probability. 

\section{Analysis}\label{sec:analysis}
We first analyze the combinatorial structure induced by this model of making payments through passing IOUs. An understanding of this combinatorial structure allows us to characterize the steady-state behavior of the Markov chain induced by repeated transactions for a number of network topologies. Finally, we compare the steady-state success probability in credit networks with various topologies to that in an equivalent centralized currency infrastructure.

\subsection{Combinatorial Structure}\label{sec:comb-struct}
A state of the network can be associated with a \emph{generalized score vector}, which characterizes the credit available to each node in that state.
\begin{mydef}
Given a labeled, directed graph $D$ over $n$ nodes with edges having associated capacities, a vector $V = \langle v_1, \dotsc, v_n\rangle \in \real_+^n$ is the \emph{generalized score vector} of $D$ if the total capacity on outgoing edges from $i$ in $D$ is $v_i$ for $i = 1, \dotsc, n$.
\end{mydef}
This is a generalization of the score vector defined by the outdegree of nodes in an orientation of a labeled multigraph \cite{kleitman:forests-scorevectors}. Note that two cycle-reachable states have the same generalized score vector, and in fact, generalized score vectors characterize classes of cycle-reachable states:
\begin{lem}[Generalization of Proposition 4.10 in \cite{gioan:cycle-cocycle}]\label{lem:equiv-class-score-vectors}
Given a credit network $G$, two states $\calS$ and $\calS'$ of $G$ are cycle-reachable if and only if they have the same generalized score vector.
\end{lem}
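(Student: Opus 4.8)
\emph{Proof proposal.} The plan is to study the two states through their \emph{difference circulation}. Represent a state by its vector of directed capacities, so $\calS \leftrightarrow (c_{uv})$ and $\calS' \leftrightarrow (c'_{uv})$; since the total capacity $c_{uv}+c_{vu}$ on each undirected edge is a fixed constant, the difference $\delta_{uv} := c'_{uv}-c_{uv}$ satisfies $\delta_{uv} = -\delta_{vu}$, so $\delta$ is a well-defined integral vector on the oriented edges of $G$.

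For the direction ``cycle-reachable $\Rightarrow$ equal generalized score vectors'' I would just check one elementary step. Routing a unit of payment around a feasible directed cycle $C$ leaves every coordinate of the generalized score vector unchanged: a node $i\in C$ has exactly one edge of $C$ leaving it, whose capacity goes up by $1$, and one edge of $C$ entering it, whose reverse capacity goes down by $1$, so the total outgoing capacity at $i$ is unchanged; nodes outside $C$ are untouched. Iterating over a sequence of such steps gives the implication.

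For the converse, assume $\calS$ and $\calS'$ have the same generalized score vector, i.e. $\sum_v c_{uv} = \sum_v c'_{uv}$ for every $u$; equivalently $\sum_v \delta_{uv}=0$ for all $u$, so $\delta$ is an integral circulation on $G$. I would induct on the nonnegative integer $N(\delta) := \sum_{\delta_{uv}>0}\delta_{uv}$. If $N(\delta)=0$ then $\calS = \calS'$ and there is nothing to do. Otherwise let $D^+$ be the digraph on $V$ containing the arc $(u,v)$ exactly when $\delta_{uv}\ge 1$. Flow conservation gives $\mathrm{in}(u)=\mathrm{out}(u)$ in $D^+$ for every $u$ (total $\delta$-weight of incoming arcs equals that of outgoing arcs), so a vertex of $D^+$ has positive in-degree iff it has positive out-degree; since $D^+$ is nonempty and finite, starting from any arc and following arcs forward yields a simple directed cycle $C$. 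The crucial point is that $C$ is \emph{feasible} in $\calS$: for an arc $(u,v)\in C$ we have $\delta_{uv}\ge 1$, and from $c_{uv}+c_{vu}=c'_{uv}+c'_{vu}$ we get $c_{vu}-c'_{vu}=\delta_{uv}\ge 1$, so the capacity that a unit of payment along $(u,v)$ must consume, namely $c_{vu}$, is at least $1$. Routing one unit around $C$ therefore produces a genuine state $\calS_1$, and its difference $\delta_1$ from $\calS'$ is $\delta$ minus the signed indicator vector of $C$, which is again a circulation; since every arc of $C$ contributed $\delta_{uv}\ge 1$, we have $N(\delta_1)=N(\delta)-|C|<N(\delta)$. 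By the inductive hypothesis $\calS'$ is cycle-reachable from $\calS_1$, and $\calS_1$ is cycle-reachable from $\calS$ by construction, so $\calS'$ is cycle-reachable from $\calS$.

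The step I expect to be the main obstacle — and the only place where the residual-routing nature of the model really enters — is verifying feasibility of the cycle $C$ extracted from $\delta$: a priori a cycle supported on the difference need not be routable in the \emph{current} state. The observation that resolves it is that a positive forward difference on an edge forces a strictly positive reverse capacity in $\calS$, precisely because the total edge capacity is conserved; everything else is the classical flow-decomposition argument transported to this setting. One can also sidestep any fuss over the model's orientation convention by noting that cycle-reachability is an equivalence relation and running the argument from whichever of $\calS,\calS'$ is more convenient.
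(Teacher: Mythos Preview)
Your argument is correct. The paper itself does not prove this lemma: it is stated without proof as a generalization of Proposition~4.10 in Gioan's cycle--cocycle paper, and the text immediately moves on to Lemma~\ref{lem:feasibility}. What you have supplied is the natural self-contained proof---read the difference of the two states as an integral circulation, extract a directed cycle from its support, observe that conservation of total edge capacity forces that cycle to be feasible in the current state, route along it, and induct. This is precisely the mechanism behind Gioan's result in the unit-capacity (orientation) setting, so you have reconstructed the cited argument rather than found a genuinely different one.

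Two minor remarks. First, the sign bookkeeping in your forward direction is reversed (in the paper's convention, routing along a credit cycle \emph{decreases} the forward capacities and \emph{increases} the reverse ones), but as you already note the conclusion that the score vector is preserved is insensitive to this, and the symmetry of cycle-reachability lets you sidestep the convention altogether. Second, your induction on $N(\delta)$ relies on integrality of the capacities, whereas the paper remarks just after Corollary~\ref{cor:reciprocal} that these observations hold without integrality. The fix is straightforward: push $\min_{(u,v)\in C}\delta_{uv}$ units around $C$ instead of one unit, which kills at least one arc of the support, and induct on $|\mathrm{supp}(\delta)|$ rather than on $N(\delta)$.
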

Moreover, two cycle-reachable states have exactly the same set of feasible transactions:
\begin{lem}\label{lem:feasibility}
For any equivalence class $C\in \calC$ of a given network, if a transaction $(s,t,v)$ (\ie, routing a payment of $v$ units from node $s$ to node $t$) is feasible in some state $S \in C$, it is feasible in all states $S' \in C$.
\end{lem}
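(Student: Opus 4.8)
The plan is to reduce feasibility of a transaction to a purely flow-theoretic statement about the generalized score vector, and then invoke Lemma \ref{lem:equiv-class-score-vectors}. Recall that routing a payment of $v$ units from $s$ to $t$ is feasible in a state $S$ precisely when the residual network in state $S$ admits an $s$-$t$ flow of value $v$; equivalently, by the max-flow/min-cut theorem, when every directed $(t,s)$-cut (i.e., every partition $V = A \cup B$ with $t \in A$, $s \in B$) has total capacity at least $v$ on edges crossing from $A$ to $B$. So it suffices to show that the capacity of every such cut is the same in $S$ and $S'$ whenever $S$ and $S'$ are cycle-reachable.

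First I would make the key observation that for any fixed vertex subset $A$, the total capacity of edges leaving $A$ depends only on the generalized score vector restricted to $A$ together with the ``internal'' capacities among vertices of $A$ — but more usefully, routing a payment around a feasible cycle $\calP$ does not change the capacity of any cut. Indeed, a cycle enters a set $A$ exactly as many times as it leaves $A$, so each time the cycle crosses from $A$ to $B$ we decrement one crossing edge by $1$ (or create/augment its reverse), and each time it crosses back from $B$ to $A$ we do the symmetric update; the net effect on the total $A \to B$ capacity is zero. Hence every $(t,s)$-cut has identical capacity in $S$ and in any state obtained from $S$ by routing a sequence of payments along feasible cycles, which is exactly the set of states cycle-reachable from $S$. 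Since $S$ and $S'$ lie in the same class $C$, all $(t,s)$-cuts agree, and therefore the min cut value agrees, so the transaction $(s,t,v)$ is feasible in $S$ iff it is feasible in $S'$.

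Alternatively — and this is the cleaner route I would actually write up — one can bypass cuts entirely: if $(s,t,v)$ is feasible in $S$, then carrying it out produces a state $S''$ whose generalized score vector is obtained from that of $S$ by subtracting $v$ from coordinate $s$ and adding $v$ to coordinate $t$. Now take any $S' \in C$; by Lemma \ref{lem:equiv-class-score-vectors}, $S'$ has the same generalized score vector as $S$. Route the payment $(s,t,v)$ formally in $S'$ along a path, obtaining a candidate state with the correct "target" score vector; the content to check is that some feasible path exists in $S'$, for which I fall back on the cut argument above applied within the class $C$. The main obstacle is precisely this last point: feasibility is a statement about the existence of a path/flow, not merely about score vectors, so the reduction to cuts (and the invariance of cut values under cycle-routing) is the real work. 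Everything else — the max-flow/min-cut invocation and bookkeeping that a cycle crosses a cut the same number of times in each direction — is routine.
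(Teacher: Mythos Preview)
The paper states Lemma~\ref{lem:feasibility} without proof, so there is no argument to compare against; your cut-invariance argument is the natural way to supply one, and it is correct. The key step --- that routing flow around a feasible cycle leaves the capacity of every directed cut unchanged, because a cycle crosses any cut equally often in each direction --- immediately yields that cycle-reachable states have identical min-cut values between every pair of nodes, and hence (via max-flow/min-cut) identical feasibility for every transaction $(s,t,v)$.

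Two minor remarks. First, your direction conventions are slightly tangled: in the paper's model a payment from $s$ to $t$ requires a \emph{credit} flow from $t$ to $s$, so the relevant object is a $t\!\to\! s$ flow, not an $s\!\to\! t$ flow. (Your cut description, with $t\in A$ and capacity measured from $A$ to $B$, is in fact the correct one for a $t\!\to\! s$ flow, so the inconsistency is only in the prose.) Second, as you yourself observe, the ``alternative'' route via score vectors does not bypass the cut argument --- matching the target score vector after the transaction does not by itself exhibit a feasible routing in $S'$ --- so your first paragraph is the actual proof and the second can be dropped.
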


Using the definition of cycle-reachability and the above observations, we show the following: 
\begin{thm}
Let $(s_1, t_1), (s_2, t_2), \dotsc, (s_T, t_T)$ be the set of transactions of value $v_1, v_2, \dotsc, v_T$ respectively that succeed when the payment from $s_i$ to $t_i$ is routed along a path $\calP_i$. Then the same set of transactions succeed when the payment from $s_i$ to $t_i$ is routed along any other feasible path $\calP'_i$.
\end{thm}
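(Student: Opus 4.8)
The plan is to induct on the number of transactions, reducing the claim to a single fact about one transaction together with Lemmas~\ref{lem:equiv-class-score-vectors} and~\ref{lem:feasibility}. The key single-transaction fact is a path-independence statement at the level of generalized score vectors: routing a payment of $v$ units from $s$ to $t$ along \emph{any} feasible path changes the generalized score vector in exactly the same way --- the coordinate of $s$ increases by $v$, the coordinate of $t$ decreases by $v$, and every other coordinate is unchanged. To see this, recall that pushing the payment along a path shifts $v$ units of capacity from one orientation to the other on each edge of the path; at an internal node the used incoming edge and the used outgoing edge contribute canceling $\pm v$ changes to that node's total outgoing capacity, whereas the payee's total outgoing capacity drops by $v$ and the payer's rises by $v$. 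Consequently, if we perform the same transaction from two states having equal generalized score vectors, the two resulting states again have equal generalized score vectors, and by Lemma~\ref{lem:equiv-class-score-vectors} they lie in the same equivalence class of $\calC$.

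Given this, the induction is immediate. Let $\calS_i$ (resp.\ $\calS'_i$) be the network state after the first $i$ transactions are routed along $\calP_1,\dots,\calP_i$ (resp.\ along $\calP'_1,\dots,\calP'_i$); the inductive hypothesis is that $\calS_i$ and $\calS'_i$ have the same generalized score vector. The base case $i=0$ is trivial since both runs start from the same state. For the inductive step, Lemma~\ref{lem:equiv-class-score-vectors} places $\calS_{i-1}$ and $\calS'_{i-1}$ in a common class $C\in\calC$; transaction $(s_i,t_i,v_i)$ is feasible in $\calS_{i-1}$ because it succeeds along $\calP_i$, so by Lemma~\ref{lem:feasibility} it is feasible in $\calS'_{i-1}$ as well. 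Hence a feasible path $\calP'_i$ from $s_i$ to $t_i$ in $\calS'_{i-1}$ exists and the $i$-th transaction succeeds in the $\calP'$-run; applying the single-transaction fact to the moves $\calP_i$ from $\calS_{i-1}$ and $\calP'_i$ from $\calS'_{i-1}$ shows $\calS_i$ and $\calS'_i$ share a generalized score vector, closing the induction. In particular all $T$ transactions succeed along the paths $\calP'_i$.

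I expect the only delicate point --- hence the main obstacle --- to be verifying the single-transaction path-independence fact cleanly: getting the edge-capacity update bookkeeping right and checking that the net effect on total outgoing capacity is $0$ at every internal node and $\pm v$ at the endpoints, uniformly over all feasible paths. One must also be careful to invoke ``feasible path'' for $\calP'_i$ relative to the correct state $\calS'_{i-1}$, so that Lemma~\ref{lem:feasibility} applies to transfer feasibility from the $\calP$-run to the $\calP'$-run. Once that is pinned down, the remainder is a routine induction that merely chains Lemmas~\ref{lem:equiv-class-score-vectors} and~\ref{lem:feasibility}.
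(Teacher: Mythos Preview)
Your proposal is correct and follows essentially the same route as the paper: induction on the number of transactions, with the inductive invariant that the two runs land in the same cycle-reachable class (equivalently, have the same generalized score vector), established via the single-transaction observation that only the payer's and payee's scores change, together with Lemmas~\ref{lem:equiv-class-score-vectors} and~\ref{lem:feasibility}. The paper's write-up is terser about the edge-capacity bookkeeping you flag as the ``delicate point,'' but the argument is the same.
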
 
\begin{proof}
Note that a failed transaction does not change the state of the network, so without loss of generality, we assume that the set of successful transactions $(s_i, t_i)$ occurred in successive timesteps $i =1$ to $T$. We prove the result by induction on $T$. The statement clearly holds for $T=1$.

Assume that the statement holds for $T = k$. Let the initial state of the network be $\calS_0$. Let $\calS_k$ be the state of the network when, starting from $\calS_0$, transactions $(s_i, t_i)$ for $i = 1, \dotsc, k$ were routed along path $\calP_i$. Let $\calS'_k$ be the corresponding state of the network after, again starting from $\calS_0$, the same set of transactions were routed along paths $\calP'_i, i = 1, \dotsc, k$. We will show that $\calS_k$ is cycle-reachable from $\calS'_k$.

A successful transaction $(s,t)$ only changes the credit extended to $s$ and $t$; the credit extended to any of the intermediate nodes remains unchanged. Since $\calS_k$ and $\calS'_k$ are the resulting states of the network after the same sequence of payments were successfully routed starting at the same state, the credit distribution (or equivalently, the generalized score vectors) in $\calS_k$ and $\calS'_k$ must be identical. Therefore, $\calS_k$ and $\calS'_k$ are cycle-reachable. Thus, from Lemma~\ref{lem:feasibility} if the transaction $(s_{k+1}, t_{k+1})$ is feasible in $\calS_k$ then it is also feasible in $\calS'_k$. This proves the result.
\end{proof}
This result shows that routing in credit networks has a \emph{path-independence property}: the choice of paths along which a sequence of payments are routed does not affect their feasibility.  Lemma~\ref{lem:equiv-class-score-vectors} and Lemma~\ref{lem:feasibility} immediately implies the following corollary:
\begin{cor}\label{cor:reciprocal}
If a transaction $(s,t,v)$ in some state in equivalence class $C_i$ results in a transition to a state in equivalence class $C_j$, then the reverse transaction $(t,s,v)$ from any state in $C_j$ will result in a transition to a state in $C_i$.
\end{cor}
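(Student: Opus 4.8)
The plan is to derive the corollary almost entirely from the two preceding lemmas, with a single cancellation in generalized score vectors doing the real work. The starting observation—already implicit in the proof of the path-independence theorem, where we noted that a successful transaction changes only the credit at its two endpoints—is that a transaction acts on generalized score vectors by a fixed translation determined only by its endpoints and value: routing a payment $(s,t,v)$ increases the total outgoing credit of the payer $s$ by $v$, decreases that of the payee $t$ by $v$, and leaves every other coordinate unchanged. Writing $\mathbf{e}_i$ for the $i$-th standard basis vector, the transaction $(s,t,v)$ sends a state with generalized score vector $\mathbf{x}$ to one with score vector $\mathbf{x} + v(\mathbf{e}_s - \mathbf{e}_t)$, and the reverse transaction $(t,s,v)$ therefore implements the inverse translation $\mathbf{x}' \mapsto \mathbf{x}' + v(\mathbf{e}_t - \mathbf{e}_s)$.

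Now suppose $(s,t,v)$ carries some state $S_i \in C_i$ to a state $S_j \in C_j$. By Lemma~\ref{lem:equiv-class-score-vectors}, every state of $C_i$ has a common generalized score vector $\mathbf{x}$ and every state of $C_j$ has score vector $\mathbf{x} + v(\mathbf{e}_s - \mathbf{e}_t)$. For the destination part of the claim, apply $(t,s,v)$ to an arbitrary state $S' \in C_j$: the resulting state has score vector $(\mathbf{x} + v(\mathbf{e}_s - \mathbf{e}_t)) + v(\mathbf{e}_t - \mathbf{e}_s) = \mathbf{x}$, so by Lemma~\ref{lem:equiv-class-score-vectors} it lies in $C_i$, as required. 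For the feasibility part, I would first check that $(t,s,v)$ is feasible in the particular state $S_j$: since $S_j$ was obtained from $S_i$ by pushing $v$ units of credit-flow from $t$ to $s$ along some path, pushing the same $v$ units back from $s$ to $t$ along the reverse of that path is feasible in $S_j$, and this is exactly the transaction $(t,s,v)$. Because $S_j \in C_j$ and, by Lemma~\ref{lem:feasibility}, feasibility of a transaction is constant on each equivalence class, $(t,s,v)$ is feasible in every state of $C_j$, in particular in $S'$. Combining the two parts gives the corollary.

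The only step that needs a little care is the feasibility half: rather than trying to verify directly that the reverse transaction succeeds from an arbitrary member of $C_j$ (which would seem to demand control over exactly which state of $C_j$ a given state of $C_i$ maps into), one exhibits a single ``undo'' witness—the state $S_j$ reached by the forward transaction, where the reverse augmenting path is manifest—and then lets Lemma~\ref{lem:feasibility} propagate feasibility to the whole class. Everything else is the two-line score-vector cancellation above.
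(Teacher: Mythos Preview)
Your argument is correct and is exactly the fleshing-out the paper has in mind: the paper simply asserts that the corollary follows immediately from Lemma~\ref{lem:equiv-class-score-vectors} and Lemma~\ref{lem:feasibility}, and your proof uses precisely those two lemmas in the natural way (score-vector cancellation via Lemma~\ref{lem:equiv-class-score-vectors} for the destination, and the undo-path witness plus Lemma~\ref{lem:feasibility} for feasibility). Your explicit treatment of the feasibility half is a useful addition, since the paper's one-line justification leaves that step implicit.
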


Note that the above observations do not require the edge capacities or payment flows to be integral. However, when restricted to a setting where edge capacities are integers and we route unit flows between nodes, the states of the credit network become equivalent to orientations of a labeled multigraph (since an edge of capacity $c$ can be viewed as $c$ labeled edges of unit capacity each).    It is known that there is a bijection between score vectors of a labeled multigraph and forests of a labeled multigraph \cite{kleitman:forests-scorevectors}. In the rest of the paper, we assume that edge capacities are integers and we route unit flows between nodes. This allows us to use the equivalence between forests, score vectors and cycle-reachable equivalence classes to characterize the steady-state of the Markov chain induced by repeated transactions.

\subsection{Steady-state Analysis}\label{subsec:ss-analysis}
Using observations about the combinatorial structure of the model, we next prove that the Markov chain induced by a symmetric transaction rate matrix $\L$ has a uniform steady-state distribution over $\calC$.
\begin{thm}\label{thm:main-ss}
Consider a Markov chain $\calM_{\calS_0}(G, \L)$ starting in state $\calS_0$ induced by a symmetric transaction rate matrix $\L$ over nodes in $G$. Let $\calC_{\calS_0} \subseteq \calC$ be the set of equivalence classes accessible from $\calS_0$ under the regime defined by $\L$. Then $\calM_{\calS_0}$ has a uniform steady-state distribution over $\calC_{\calS_0}$.
\end{thm}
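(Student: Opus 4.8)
The plan is to show that the Markov chain $\calM_{\calS_0}$, when lifted to the quotient over equivalence classes $\calC$, is itself a Markov chain (by Lemma~\ref{lem:feasibility}, feasibility depends only on the class, so the transition probabilities out of a class are well-defined), and that this quotient chain is doubly stochastic on $\calC_{\calS_0}$; a finite, irreducible, doubly stochastic chain has the uniform distribution as its unique stationary distribution. Irreducibility on $\calC_{\calS_0}$ is essentially the definition of $\calC_{\calS_0}$ as the set of classes accessible from $\calS_0$, together with the observation (Corollary~\ref{cor:reciprocal}) that accessibility between classes is symmetric when $\L$ is symmetric, so every class in $\calC_{\calS_0}$ can reach back to $\calS_0$'s class and hence to every other. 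So the real content is double stochasticity: for each class $C_j \in \calC_{\calS_0}$, we need $\sum_{C_i} P(C_i, C_j) = 1$ where $P$ denotes the induced transition matrix on classes.

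First I would fix $C_j$ and account for all the probability mass flowing into it. A transition into $C_j$ from $C_i \neq C_j$ happens with probability $\l_{st}$ whenever a transaction $(s,t)$ carries some state of $C_i$ into some state of $C_j$; by Corollary~\ref{cor:reciprocal} this is matched by the reverse transaction $(t,s)$, which carries $C_j$ back to $C_i$ with probability $\l_{ts} = \l_{st}$ (symmetry of $\L$). So the incoming mass $\sum_{C_i \neq C_j} P(C_i, C_j)$ equals the outgoing mass $\sum_{C_i \neq C_j} P(C_j, C_i)$. The self-loop probability $P(C_j, C_j)$ consists of two parts: transactions infeasible from $C_j$ (probability mass of $(s,t)$ pairs with no $s$–$t$ path), and feasible transactions $(s,t)$ that stay within $C_j$. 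The first part is the same whether we view it as ``mass leaving $C_j$ that comes back'' — it is literally a self-loop and contributes equally to row sum and to the ``retained'' mass; the second part, a feasible transaction staying inside $C_j$, is again matched by its reverse, which also stays inside $C_j$. Putting this together, every unit of probability mass on row $C_j$ is matched by an equal unit of mass arriving at column $C_j$: $\sum_{C_i} P(C_i, C_j) = \sum_{C_i} P(C_j, C_i) = 1$. Hence $P$ restricted to $\calC_{\calS_0}$ is doubly stochastic.

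The cleanest way to organize the bookkeeping is to note that the map sending a transaction $(s,t)$ to its reverse $(t,s)$ is an involution on the index set of possible transactions, it preserves probability because $\l_{st} = \l_{ts}$, and — by Corollary~\ref{cor:reciprocal} — it sends a $C_i \to C_j$ transition to a $C_j \to C_i$ transition. This involution is a bijection between the (weighted) edges of the quotient chain directed out of $C_j$ and those directed into $C_j$, which is exactly the statement that column sums equal row sums equal $1$. Then invoke the standard fact: an irreducible finite doubly stochastic Markov chain has the uniform stationary distribution, and conclude that $\calM_{\calS_0}$ spends, in steady state, equal probability on each class of $\calC_{\calS_0}$.

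The main obstacle I anticipate is not any single hard step but making sure the quotient is genuinely a Markov chain and that the involution argument is airtight at the level of \emph{probability mass} rather than just \emph{existence} of transitions — in particular one must be careful that when several distinct states of $C_i$ map under the same transaction $(s,t)$ into $C_j$, the reverse transaction accounts for exactly those and no others, which is where Lemma~\ref{lem:feasibility} and Corollary~\ref{cor:reciprocal} do the heavy lifting (feasibility and the image class depend only on the source class, not the representative state). Handling the self-loops — distinguishing infeasible transactions from feasible-but-class-preserving ones and checking both are involution-stable — is the fiddly part, but conceptually routine once the framework is set up.
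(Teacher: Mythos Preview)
Your approach is correct and essentially the same as the paper's: both lift to the quotient chain on equivalence classes and use Corollary~\ref{cor:reciprocal} together with $\l_{st}=\l_{ts}$ to match the mass of each $C_i\to C_j$ transition with that of the reverse $C_j\to C_i$ transition. The paper is terser, observing directly that this makes $P$ \emph{symmetric} (not merely doubly stochastic, though your involution argument in fact shows symmetry too), whence the uniform distribution is stationary; your extra care about well-definedness of the quotient and the self-loop case is sound but not where the paper spends its words.
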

\begin{proof}
As a consequence of the above facts, we can represent transactions as resulting in transitions between equivalence classes in $\calC$ instead of transitions between states of $\calM(G,\L)$. Let $\calT_{ij}$ be the set of transactions $(s,t)$ that result in a transition from some state in $C_i$ to some state in $C_j$. Then, (overloading the symbol $P$) we define the transition matrix over equivalence classes in $\calC_{\calS_0}$ as
\[
P(C_i, C_j) = \sum_{(s,t) \in \calT_{ij}} \l_{st}
\]
Note that Corollary~\ref{cor:reciprocal} implies that $(s,t) \in \calT_{ij}$ if and only if $(t,s) \in \calT_{ji}$. Further since $\L$ is symmetric, for any equivalence classes $C_i, C_j \in \calC_{\calS_0}, P(C_i, C_j) = P(C_j, C_i)$. Since $P$ is a symmetric stochastic  matrix, the uniform distribution is stationary with respect to $P$.
\end{proof}	
This immediately gives the following corollary:
\begin{cor}\label{cor:unif-ss-dist}
If $\calM(G,\L)$ is an ergodic Markov chain induced by a symmetric transaction rate matrix $\L$, it has a uniform steady state distribution over $\calC$.
\end{cor}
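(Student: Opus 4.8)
The plan is to obtain this directly from Theorem~\ref{thm:main-ss} by noting that ergodicity erases the distinction between $\calC_{\calS_0}$ and $\calC$. First I would recall the setup from the proof of Theorem~\ref{thm:main-ss}: by path-independence (Lemma~\ref{lem:feasibility}) and reciprocity (Corollary~\ref{cor:reciprocal}), the aggregate quantity $P(C_i,C_j)=\sum_{(s,t)\in\calT_{ij}}\l_{st}$ is well defined regardless of which state of $C_i$ one starts in, so $\calM(G,\L)$ descends to a Markov chain on the partition $\calC$, and ``steady-state distribution over $\calC$'' means the pushforward of the stationary distribution on network states onto the equivalence classes. That matrix $P$ on $\calC$ is symmetric (again using that $\L$ is symmetric together with Corollary~\ref{cor:reciprocal}), hence doubly stochastic, so the uniform distribution over $\calC$ is stationary for it.

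Next I would invoke ergodicity. If $\calM(G,\L)$ is ergodic it is in particular irreducible, so every network state—and therefore every equivalence class in $\calC$—is accessible from every starting state $\calS_0$. In the notation of Theorem~\ref{thm:main-ss} this is exactly the statement $\calC_{\calS_0}=\calC$ for every $\calS_0$, so that $\calM_{\calS_0}$ is simply $\calM(G,\L)$ itself and Theorem~\ref{thm:main-ss} applies verbatim to give uniformity over all of $\calC$. Finally, ergodicity guarantees that the stationary distribution is unique and equals the limiting distribution, so the steady state of $\calM(G,\L)$ over $\calC$ is the uniform one.

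The one step that carries all the weight is the observation that ergodicity forces $\calC_{\calS_0}=\calC$; everything else is bookkeeping, so I do not anticipate a genuine obstacle. The only point deserving a word of care is semantic: ``uniform over $\calC$'' asserts that each equivalence class carries total probability $1/|\calC|$, \emph{not} that each of the $(c+1)^m$ network states is equally likely—the states inside a class need not be equiprobable, and obtaining state-level uniformity would additionally require all equivalence classes to have equal size, which is a separate combinatorial question about orientations/forests and is not claimed here.
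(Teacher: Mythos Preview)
Your proposal is correct and matches the paper's approach: the paper simply says the corollary follows ``immediately'' from Theorem~\ref{thm:main-ss}, and the one-line content you supply---that ergodicity forces $\calC_{\calS_0}=\calC$ for every starting state---is exactly the missing link. Your closing remark distinguishing uniformity over equivalence classes from uniformity over network states is a helpful clarification but is not needed for the argument itself.
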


Next we show a sufficient condition for $\calM$ to be ergodic.
\begin{lem}
If for all $u \ne v, \l_{uv} > 0$, then $\calM$ is ergodic.
\end{lem}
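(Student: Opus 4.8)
The plan is to show that $\calM$, being a finite chain on its $(c+1)^m$ states, is both irreducible (every pair of states communicates) and aperiodic; together with finiteness this gives ergodicity. If $m=0$ there is a single state and nothing to prove, so assume $G$ has at least one edge (then $n\ge 2$ automatically, since $\sum_{u,v}\l_{uv}=1$). I would establish irreducibility by reducing an arbitrary transition $\calS\to\calS'$ to a sequence of single-unit ``edge flips'', and aperiodicity by exhibiting one reachable state with a self-loop.

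For the irreducibility step, the key observation I would prove is this: if in state $\calS$ some edge $\{u,v\}$ has $c_{uv}\ge 1$, then the transaction $(v,u)$ is feasible, its unique shortest feasible path is the one-hop path through the edge $\{u,v\}$, and hence $\calM$ is forced to route it along that edge, moving in a single step to the state $\calS'$ identical to $\calS$ except that one unit of capacity has shifted from $c_{uv}$ to $c_{vu}$. Since $\l_{vu}>0$, this transition has positive probability. Flips of distinct edges affect disjoint coordinates of the state vector, and flipping an edge ``toward'' a target orientation keeps the source-side capacity positive until it reaches its target value; so from any $\calS$ one reaches any $\calS'$ by performing, edge by edge, $|c_{uv}(\calS)-c_{uv}(\calS')|$ flips in the appropriate direction. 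Hence all $(c+1)^m$ states communicate.

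For aperiodicity, by irreducibility it suffices to exhibit a single state carrying a self-loop. I would take a node $v$ incident to at least one edge and any state $\calS^{*}$ in which every edge at $v$ is oriented away from $v$ (all $c_{uv}=0$): then $v$ is bankrupt, no payment can originate at $v$, and $(v,t)$ is infeasible for every $t\ne v$, so $P(\calS^{*},\calS^{*})\ge\sum_{t\ne v}\l_{vt}>0$. Finiteness, irreducibility and aperiodicity then yield ergodicity. The main thing to get right is the claim in the irreducibility step that neither the shortest-feasible-path rule nor the $s\ne t$ restriction obstructs realizing a clean one-unit flip; this is exactly why each flip is routed as a length-one transaction along a direct edge, where the routing is unambiguous. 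Note that the path-independence theorem is not needed for this argument.
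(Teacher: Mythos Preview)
Your proof is correct and follows essentially the same route as the paper's: finiteness, irreducibility via single-edge flips (what the paper calls the ``edit distance'' argument), and aperiodicity via a self-loop at some state. Your version is more careful than the paper's sketch in checking that the shortest-feasible-path routing rule actually realizes the intended one-unit edge flip (the direct edge is the unique length-one feasible path) and in explicitly exhibiting a self-loop state via a bankrupt node, points the paper simply asserts.
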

\begin{proof}
$\calM$ has a finite number of states. Since $\forall u, v, u \ne v, \l_{uv} > 0$, for all states $\calS$ of $\calM$, $P(\calS, \calS) < 1$. Therefore, $\calM$ does not have a sink state. It is irreducible since for any states $\calS$ and $\calS'$ such that the ``edit distance" between $\calS$ and $\calS'$ is $k$, the $k$-step transition probability from $\calS$ to $\calS'$, $P^k(\calS, \calS') > 0$. Finally, to see that $\calM$ is aperiodic, consider a state $\calS$ such that $P(\calS, \calS) > 0$. For this state, both $P^2(\calS, \calS)$ and $P^3(\calS, \calS)$ are non-zero. Therefore, state $i$ is aperiodic. Since $\calM$ is finite-state, irreducible and aperiodic, it is ergodic.
\end{proof}

Theorem \ref{thm:main-ss} and Corollary~\ref{cor:unif-ss-dist} reduce the problem of computing the steady-state transaction failure probability to a counting problem. We instantiate this theorem for various network topologies to characterize their steady-state distribution and infer from it the steady-state failure probability in those topologies. For general graphs, we can use this theorem to bound the bankruptcy probability of a node.

\begin{thm}\label{thm:bankruptcy-prob}
Let $\calM(G, \L)$ be an ergodic Markov chain induced by a symmetric transaction matrix $\L$ over nodes in $G$. Fix a vertex $v$.  The steady-state probability that $v$ goes bankrupt is at most $1/(d_v + 1)$, where $d_v = \sum_{(v,u)\in E} (c_{uv} + c_{vu})$ is total credit capacity of all edges incident upon  node $v$.
\end{thm}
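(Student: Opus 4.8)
The plan is to exploit the uniform steady-state distribution over the equivalence classes in $\calC$ (Corollary~\ref{cor:unif-ss-dist}) and reduce the bankruptcy bound to a counting argument. Fix the vertex $v$. In any state $\calS$, the node $v$ is bankrupt precisely when the total capacity on the outgoing edges from $v$ is zero, i.e. when the $v$-th coordinate of the generalized score vector is $0$. Since Lemma~\ref{lem:equiv-class-score-vectors} tells us that the equivalence class of a state is determined by its generalized score vector, ``$v$ is bankrupt'' is a property of the equivalence class, not of the individual state: an entire class $C \in \calC$ is either entirely bankrupt-for-$v$ or entirely solvent-for-$v$. So, by Corollary~\ref{cor:unif-ss-dist}, the steady-state bankruptcy probability equals $|\calC^{\mathrm{bad}}| / |\calC|$, where $\calC^{\mathrm{bad}}$ is the set of classes in which $v$ has score $0$ and $\calC$ is the set of classes reachable from $\calS_0$ (I will either restrict attention to ergodic $\calM$ so that all of $\calC$ is reachable, or carry $\calC_{\calS_0}$ throughout — the hypothesis already assumes ergodicity).

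The heart of the argument is an injection from $\calC^{\mathrm{bad}}$ into $d_v$ disjoint copies of its complement — equivalently, a way of charging each bad class to $d_v + 1$ distinct classes (itself plus $d_v$ good ones), after which $|\calC^{\mathrm{bad}}|/|\calC| \le 1/(d_v+1)$ follows. Here is the map I would use. Work with the integral picture: an equivalence class corresponds to a generalized score vector, which records, for the $d_v$ unit-capacity edge-slots incident to $v$, how many are oriented away from $v$. In a bad class this count is $0$ — every incident slot points into $v$. Now for $j = 1, \dots, d_v$, flip exactly $j$ of the incident slots to point away from $v$; this changes the score vector (the $v$-coordinate becomes $j$, and the coordinate of the appropriate neighbor drops), hence lands in a genuinely different equivalence class, and these $d_v$ classes are pairwise distinct because they have distinct $v$-coordinates and distinct from the original bad class. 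The subtle point is that I must choose which $j$ slots to flip in a way that is consistent across all bad classes so the resulting map is injective on $\calC^{\mathrm{bad}}$: fix once and for all an ordering of the incident edge-slots and always flip the first $j$ of them. Two distinct bad classes have distinct score vectors agreeing-to-$0$ on the $v$-block but differing somewhere else; after flipping the first $j$ slots they still differ in that same other coordinate, so images stay distinct. Thus each bad class $C$ maps to $1 + d_v$ distinct classes $\{C, C_1, \dots, C_{d_v}\}$, and I must finally check these ``bundles'' are disjoint across different bad classes — which again follows from the fact that the score vectors outside the $v$-block are untouched and hence act as an identifier.

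The step I expect to be the main obstacle is making the disjointness-of-bundles argument fully rigorous: a priori the class $C_j$ obtained from one bad class by flipping its first $j$ slots could coincide with $C_{j'}$ obtained from another bad class, if flipping happens to ``repair'' the differences. The clean way around this is to note that flipping slots incident to $v$ only alters the $v$-coordinate and the coordinates of $v$'s neighbors by bounded amounts, and to recover, from any image class together with the knowledge of how many incident slots were flipped, both the number $j$ and the original bad class — i.e. the map $C \mapsto (C, j)$, $j \in \{0,1,\dots,d_v\}$, has an explicit left inverse (set the $v$-block back to all-zero and undo the neighbor decrements). Exhibiting that inverse turns the counting bound into a one-line consequence: the disjoint union $\bigsqcup_{C \in \calC^{\mathrm{bad}}} \{C, C_1, \dots, C_{d_v}\}$ has size $(d_v+1)|\calC^{\mathrm{bad}}|$ and sits inside $\calC$, so $(d_v+1)|\calC^{\mathrm{bad}}| \le |\calC|$, giving the claimed steady-state bound $\prob[v \text{ bankrupt}] = |\calC^{\mathrm{bad}}|/|\calC| \le 1/(d_v+1)$.
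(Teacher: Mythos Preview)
Your argument is correct and reaches the same bound, but via a somewhat different route than the paper. Both proofs start identically: by Corollary~\ref{cor:unif-ss-dist} the bankruptcy probability is $|\calC^{\mathrm{bad}}|/|\calC|$, and one then wants $(d_v+1)\,|\calC^{\mathrm{bad}}| \le |\calC|$. You establish this by working directly with score vectors, building an explicit injection $\calC^{\mathrm{bad}} \times \{0,1,\dots,d_v\} \hookrightarrow \calC$ that flips the first $j$ incident edge-slots; the left inverse you describe (read off $j$ from the $v$-coordinate, then undo the fixed decrements on the neighbors) does indeed make this injective, so the step you flagged as the main obstacle goes through cleanly. The paper instead invokes the Kleitman--Winston bijection between cycle-reachable equivalence classes of $G$ and labeled subforests of $G$: classes in which $v$ is bankrupt correspond exactly to subforests of $G_{-v}$, and each such subforest extends to $d_v+1$ distinct subforests of $G$ (adjoin one of the $d_v$ labeled edges at $v$, or none), giving $F(G)\ge (d_v+1)F(G_{-v})$ in one line. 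Your approach has the advantage of being self-contained --- it does not import the forest bijection as a black box --- while the paper's is shorter once that bijection is in hand.

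One small slip: you take ``$v$ bankrupt'' to mean the $v$-coordinate of the score vector is $0$. In the paper's conventions bankruptcy means $v$ cannot be a \emph{payer}, i.e.\ $\sum_u c_{uv}=0$, which is $v$-score equal to $d_v$, not $0$. This does not affect the bound, since the global orientation-reversal $c_{uv}\leftrightarrow c_{vu}$ is a bijection on $\calC$ interchanging score $0$ with score $d_v$; but strictly you should either reverse the direction of your slot-flips or insert this symmetry remark.
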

\begin{proof}
From Corollary~\ref{cor:unif-ss-dist}, we know that $\calM(G, \L)$ has a uniform steady-state distribution over the cycle-reachable equivalence classes of $G$. So the steady-state probability that $v$ goes bankrupt is simply the fraction of equivalence classes in which $v$ is bankrupt.

We know that the total number of cycle-reachable equivalence classes of $G$ is the number of subforests over all $n$ nodes, whereas the number of equivalence classes in which $v$ is bankrupt is equal to the number of subforests on the remaining $n-1$ nodes \cite{kleitman:forests-scorevectors}.  Let $F(G)$ be the total number of subforests of $G$ and $F(G_{-v})$ be the number of subforests on nodes other than $v$. An edge $e = (u,v)$ having total credit capacity $c_e$ can be viewed as $c_e$ labeled edges each of unit capacity. So, each subforest on the remaining $n-1$ nodes can be extended to a subforest on all $n$ nodes by adding one of the $d_v$ edges incident on $v$ or adding $v$ without adding any edges.  Therefore $F(G) \ge (d_v+1)F(G_{-v})$. Thus, the steady-state probability that $v$ goes bankrupt, $F(G_{-v})/F(G)$, is at most $1/(d_v + 1)$.  
\end{proof}
Therefore, if $h_G$ is the harmonic mean of the $d_v$s in $G$, then the average node bankruptcy probability in $G$ is $\T(1/h_G)$.

\subsubsection{Trees}
The above results allow us to make the following observations about the steady-state distribution over trees. We denote by $T_{n;c}$ a tree with $n$ nodes whose each edge has total credit capacity $c$.
\begin{thm}\label{thm:tree-unif}
If $\calM(T_{n;c}, \L)$ is an ergodic Markov chain induced by a symmetric transaction rate matrix $\L$ on a tree $T_{n;c}$, then $\calM(T_{n;c}, \L)$ has a uniform steady-state distribution over its states.
\end{thm}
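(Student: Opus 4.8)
The plan is to reduce the claim to a purely structural statement about trees and then invoke Corollary~\ref{cor:unif-ss-dist}. Since $\calM(T_{n;c},\L)$ is assumed ergodic and $\L$ is symmetric, Corollary~\ref{cor:unif-ss-dist} already gives that it has a uniform steady-state distribution over $\calC$, the set of cycle-reachable equivalence classes. So it suffices to prove that on a tree every equivalence class is a singleton: in that case ``uniform over $\calC$'' is literally ``uniform over all states'', and ergodicity guarantees that all $(c+1)^{n-1}$ states in fact belong to the chain.

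By Lemma~\ref{lem:equiv-class-score-vectors}, two states lie in the same class iff they have the same generalized score vector, so it is enough to show that on $T_{n;c}$ the generalized score vector $\langle v_1,\dots,v_n\rangle$ determines the state uniquely. I would prove this by induction on $n$, peeling a leaf. Let $\ell$ be a leaf with unique neighbor $p$; the only edge incident to $\ell$ is $\{\ell,p\}$, so $v_\ell$ equals the capacity $c_{\ell p}$ on the copy of that edge oriented out of $\ell$, which pins down both $c_{\ell p}$ and $c_{p\ell}=c-c_{\ell p}$. Deleting $\ell$ leaves a tree $T_{n-1;c}$ whose generalized score vector is obtained from $\langle v_1,\dots,v_n\rangle$ by dropping $v_\ell$ and decreasing $v_p$ by $c_{p\ell}$; by the inductive hypothesis this data determines the orientation of all remaining edges, and together with the already-recovered edge $\{\ell,p\}$ it determines the whole state. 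The base case $n=1$ is trivial.

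Hence each class in $\calC$ is a single state, so the uniform distribution over $\calC$ is exactly the uniform distribution over the states of $\calM(T_{n;c},\L)$, which is the claim. The only nonroutine step is the leaf-peeling argument, and even that can be sidestepped by a counting observation: a tree with every edge of multiplicity $c$ has exactly $(c+1)^{n-1}$ subforests (at most one of the $c$ copies of each of the $n-1$ edges may be used, and any such choice is automatically acyclic), which already equals the number of states of $\calM$; since states partition into classes and there are as many classes (equivalently, score vectors, by~\cite{kleitman:forests-scorevectors}) as states, every class must be a singleton.
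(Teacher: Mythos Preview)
Your proposal is correct and follows the same two-step structure as the paper: invoke Corollary~\ref{cor:unif-ss-dist} for uniformity over $\calC$, then argue that on a tree every class is a singleton. The paper's justification for the singleton step is more direct than either of yours, though---it simply observes that since $T_{n;c}$ contains no cycles, no nontrivial cycle-routing is possible, so cycle-reachability is trivial; your leaf-peeling induction and forest-counting argument are correct but more machinery than needed.
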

\begin{proof}
The result follows directly from Corollary~\ref{cor:unif-ss-dist} and the fact that, since trees have no cycles, each equivalence class in $\calC$ is a singleton.
\end{proof}
Since $\calM(T_{n;c}, \L)$ has a uniform steady-state distribution over its states,  the steady-state probability that a transaction over a path of length $l$ will succeed is $[c/(c+1)]^l$; this follows from the fact that each edge along the path should have non-zero capacity in the direction of the transaction. Therefore, the steady-state success probability, $\prob_s$, for $T_{n;c}$ is given by
\[
\prob_s(T_{n;c}) = \expect_l \left(\frac{c}{c+1}\right)^l
\]
Using this expression for the steady-state success probability, we can show that if $\L$ is uniform (\ie, $\l_{st} = 1/n(n-1)$ for all $s, t$), the steady-state success probability for tree networks has a lower bound of $\T(c/n^2)$ and an upper bound of $\T(c/(c+1))$. The lower bound is attained for line networks and the upper bound for star networks.

\subsubsection{Cycles}
Next we derive the steady-state success probability for cycle graphs.  Let $G^\circ_{n;c}$ be a cycle graph of $n > 2$ nodes where each edge has a total credit capacity of $c$. Let  $\calM(G^\circ_{n;c}, \L)$ be an ergodic Markov chain induced by a symmetric transaction matrix $\L$. Then $\calM(G^\circ_{n;c}, \L)$ will have a uniform steady-state distribution over the cycle-reachable equivalence classes of $G^\circ_{n;c}$. We use this fact to show the following result:
\begin{thm}\label{thm:cycles}
If $\calM(G^\circ_{n;c}, \L)$ is an ergodic Markov chain induced by a symmetric transaction matrix $\L$ over nodes in $G^\circ_{n;c}$, then the steady-state transaction success probability, $\prob_s(G^\circ_{n;c})$, is given by
\[
\prob_s(G^\circ_{n;c}) = \expect_l \frac{r^l + r^{n-l} -2r^n}{1-r^n}
\]
where $r = c/(c+1)$ and $l$ is the number of edges between a pair of transacting nodes.
\end{thm}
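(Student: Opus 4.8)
The plan is to apply Corollary~\ref{cor:unif-ss-dist}: since $\calM(G^\circ_{n;c},\L)$ is ergodic and $\L$ is symmetric, its steady-state distribution is uniform over the cycle-reachable equivalence classes of $G^\circ_{n;c}$. Hence the steady-state failure probability of a transaction between a fixed pair of nodes equals the fraction of equivalence classes in which that transaction is infeasible, and the whole problem reduces to two counting problems on the cycle: counting equivalence classes, and counting those in which a given pair cannot transact.

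First I would set up coordinates. Label the nodes $0,1,\dots,n-1$ cyclically, and for edge $i$ (between $i$ and $i+1$) let $c_i\in\{0,1,\dots,c\}$ be its residual capacity in the ``forward'' direction around the cycle, so that $c-c_i$ is the capacity in the ``backward'' direction; a state of the network is exactly a tuple $(c_0,\dots,c_{n-1})\in\{0,\dots,c\}^n$. Using the generalized score vector, the total out-capacity of node $i$ is $v_i=c_i+(c-c_{i-1})$; solving this linear recurrence (it is consistent around the cycle precisely because $\sum_i v_i=nc$) shows that the tuples realizing a fixed score vector form exactly one orbit $\{(c_0+k,\dots,c_{n-1}+k)\}$ under global additive shifts that keep all coordinates in $[0,c]$. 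By Lemma~\ref{lem:equiv-class-score-vectors}, the equivalence classes in $\calC$ are precisely these shift-orbits, and their number is $(c+1)^n-c^n$ (the orbits are in bijection with their minimal representatives, the tuples with $\min_i c_i=0$; equivalently this counts the sub-forests of the multigraph obtained by splitting each of the $n$ edges into $c$ parallel unit edges, whose only cycles are the parallel-edge pairs and the Hamiltonian cycle).

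Next I would describe feasibility via Lemma~\ref{lem:feasibility}. Fix transacting nodes whose two arcs contain $l$ and $n-l$ edges respectively; the two arcs carry the payment around the cycle in opposite senses, so one arc is usable iff all its edges satisfy $c_i\ge1$ and the other is usable iff all its edges satisfy $c_i\le c-1$. Thus the transaction fails iff some edge of the $l$-arc has $c_i=0$ \emph{and} some edge of the $(n-l)$-arc has $c_i=c$. The observation that makes the count clean is that any such failure tuple automatically has $\min_i c_i=0$ and $\max_i c_i=c$, so its shift-orbit is a singleton and the tuple is its own minimal representative; hence the number of failure classes equals the number of failure tuples, which I count by inclusion--exclusion: from the $(c+1)^n$ tuples, subtract the $c^l(c+1)^{n-l}$ with the $l$-arc entirely $\ge1$, subtract the $(c+1)^l c^{n-l}$ with the $(n-l)$-arc entirely $\le c-1$, and add back the $c^n$ satisfying both. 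Dividing by $(c+1)^n-c^n$, taking the complement, and dividing numerator and denominator by $(c+1)^n$ gives a conditional success probability of $\frac{r^l+r^{n-l}-2r^n}{1-r^n}$ with $r=c/(c+1)$; averaging over the distribution of $l$ induced by $\L$ gives the claimed $\expect_l$.

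The main obstacle is the combinatorial bridge in the second step: correctly identifying equivalence classes with shift-orbits (and thereby getting the $(c+1)^n-c^n$ normalization) and noticing that failure forces the orbit to be a singleton, so that counting failure classes is the same as counting failure tuples. Once that is in place, everything else is routine inclusion--exclusion and algebraic simplification; the direction conventions (credit versus payment flow) do not affect the count, since all that matters is that the two arcs impose complementary one-sided constraints on the $c_i$.
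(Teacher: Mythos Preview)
Your argument is correct and reaches the same formula as the paper, but via a genuinely different route. The paper selects, as the canonical representative of each equivalence class, the state in which the \emph{first} edge (in cyclic order) with zero capacity in one fixed direction occurs at position $j$; it then sums over $j$, splitting according to whether $j$ lies on the $l$-arc or the $(n-l)$-arc, and in each case counts directly the representatives that admit a path along $\calP_1$ or $\calP_2$. You instead identify equivalence classes with additive shift-orbits of the tuple $(c_0,\dots,c_{n-1})$, take the $\min_i c_i=0$ representative, and then exploit the observation that any failure state already has $\min_i c_i=0$ and $\max_i c_i=c$, so failure orbits are singletons and the failure-class count coincides with the failure-tuple count, which falls out of a two-event inclusion--exclusion. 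Both approaches give the same normalizing count $(c+1)^n-c^n$ (the paper's telescoping sum equals this), and both give the same numerator $r^l+r^{n-l}-2r^n$ after dividing by $(c+1)^n$. Your route is arguably cleaner: the shift-orbit picture makes the structure of $\calC$ transparent, and the singleton-orbit observation lets you avoid the case split on the location of the ``first saturated edge''. The paper's approach, on the other hand, does not need to argue separately that failure orbits are singletons, since it never counts failure states directly. Either way, the key combinatorial input is the same: two complementary one-sided constraints on disjoint arcs, leading to the product structure $c^l(c+1)^{n-l}+c^{n-l}(c+1)^l-2c^n$ in the success count.
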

The proof involves counting the total number of equivalence classes of $G^\circ_{n;c}$ and the number of classes which allow a transaction between a pair of nodes that have $l$ edges between them. See Appendix~\ref{app:cycles} for the proof.

\subsubsection{Complete Graphs}
Let $K_{n;c}$ be a complete graph over $n$ nodes such that each edge has total capacity $c$. From Theorem~\ref{thm:bankruptcy-prob}, we know that the bankruptcy probability of a node in $K_{n;c}$ is $\T(1/nc)$. Next we will show that the steady-state transaction failure probability between two nodes in $K_{n;c}$ is also $\T(1/nc)$. Note that a node being bankrupt is a sufficient but not necessary condition for a transaction with that node as the payer to fail.

\begin{thm}\label{thm:complete-graph}
Let $\calM(K_{n;c}, \L)$ be an ergodic Markov chain induced by a symmetric transaction matrix $\L$ over nodes in $K_{n;c}$. Then the steady-state transaction failure probability in $K_{n;c}$ is $\T(1/nc)$.
\end{thm}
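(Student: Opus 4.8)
The plan is to reduce the steady-state failure probability to a forest-counting problem, exactly as in the proof of Theorem~\ref{thm:bankruptcy-prob}, and then estimate the counts. Since $\calM(K_{n;c},\L)$ is ergodic and $\L$ is symmetric, Corollary~\ref{cor:unif-ss-dist} makes the chain uniform over the cycle-reachable equivalence classes of $K_{n;c}$, and by Lemma~\ref{lem:equiv-class-score-vectors} together with the score-vector/subforest bijection of \cite{kleitman:forests-scorevectors} there are $F(K_{n;c})$ of these, where $F(H)$ denotes the number of subforests of the multigraph $H$. Because the automorphism group of $K_{n;c}$ is transitive on ordered pairs of distinct vertices and, by Lemma~\ref{lem:feasibility}, feasibility of a transaction is an invariant of an equivalence class, every automorphism permutes the classes while preserving infeasibility of any given ordered pair; hence for any symmetric $\L$ the failure probability equals $q_{n,c}:=N/F(K_{n;c})$, where $N$ is the number of equivalence classes in which a fixed pair $(s,t)$ is infeasible. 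It then suffices to show $q_{n,c}=\T(1/nc)$.

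For the lower bound I would use that every class in which $t$ is bankrupt is infeasible for $(s,t)$, and that (as in the proof of Theorem~\ref{thm:bankruptcy-prob}) there are exactly $F(K_{n-1;c})$ such classes, giving $q_{n,c}\ge F(K_{n-1;c})/F(K_{n;c})$. For the upper bound I would use the min-cut characterization of failure: $(s,t)$ fails in a class $C$ iff the set $A$ of vertices reachable from $t$ (a class invariant, again by Lemma~\ref{lem:feasibility}) omits $s$, in which case all $c|A|(n-|A|)$ unit-edges across the cut $(A,\bar A)$ point into $A$. The classes in which all cross-edges of a fixed cut $(A,\bar A)$ point into $A$ are in bijection with the product of the class sets of $K_{|A|;c}$ and $K_{n-|A|;c}$ (no feasible cycle can cross such a cut, so the two sides evolve independently), so a union bound over the choice of $A$ gives $N\le\sum_{a=1}^{n-1}\binom{n-2}{a-1}F(K_{a;c})F(K_{n-a;c})$.

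What remains is enumerative. The key estimate is $F(K_{m;c})=\T(c^{m-1}m^{m-2})$, uniform in $m\ge1$ and $c\ge1$: the lower bound is the spanning-tree count $c^{m-1}m^{m-2}$, and the upper bound follows from $F(K_{m;c})\le c^{m-1}F(K_{m;1})$ together with the classical fact that the number of labelled forests on $m$ vertices is $\T(m^{m-2})$. Substituting, the lower bound becomes $q_{n,c}\ge\Omega\big(c^{n-2}(n-1)^{n-3}/(c^{n-1}n^{n-2})\big)=\Omega(1/nc)$ since $((n-1)/n)^{n-3}=\T(1)$, and the union bound becomes $N=O\big(c^{n-2}\sum_{a=1}^{n-1}\binom{n-2}{a-1}a^{a-2}(n-a)^{n-a-2}\big)$. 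The one genuinely non-routine step I expect is bounding this Abel-type sum by $O(n^{n-3})$: the sum is symmetric under $a\mapsto n-a$ and is dominated by the unbalanced cuts (one side of bounded size), each of which contributes a $\T(1)$-or-smaller multiple of $n^{n-3}$ with prefactors that are summable in the size of the small side, while the balanced terms contribute only an $O(1/n)$ fraction. This yields $q_{n,c}=N/F(K_{n;c})=O\big(c^{n-2}n^{n-3}/(c^{n-1}n^{n-2})\big)=O(1/nc)$, which together with the lower bound gives $\T(1/nc)$.
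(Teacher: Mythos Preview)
Your proposal is correct and follows essentially the same route as the paper: reduce to forest counting via the uniform stationary distribution, lower-bound by bankruptcy of $t$ (i.e.\ $F(K_{n-1;c})/F(K_{n;c})$), upper-bound by a union over $(A,\bar A)$-cuts giving $\sum_{a=1}^{n-1}\binom{n-2}{a-1}F(K_{a;c})F(K_{n-a;c})$, and plug in $F(K_{m;c})=\Theta(c^{m-1}m^{m-2})$. The paper does exactly this, and your ``one genuinely non-routine step''---bounding the Abel-type sum $\sum_a\binom{n-2}{a-1}a^{a-2}(n-a)^{n-a-2}$ by $O(n^{n-3})$---is precisely where the paper spends its effort, carrying out with Stirling's series the heuristic you sketch (unbalanced cuts contribute $\Theta(n^{n-3})$ with summable prefactors in the small side, balanced cuts contribute $O(n^{-1/2})$ of that). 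Your automorphism-transitivity observation to make the failure probability independent of the chosen pair is a clean addition the paper leaves implicit.
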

See Appendix~\ref{app:complete-graph} for the proof. We will compare this probability with that in an equivalent centralized model to show that liquidity in credit networks with this topology is comparable to the centralized system.

\subsubsection{Erd\"{o}s-R\'{e}nyi Networks}\label{thm:gnp}
Here we prove that under a symmetric transaction regime, the steady-state probability that a node in a Erd\"{o}s-R\'{e}nyi network, $G_c(n,p)$, will go bankrupt is $\T(1/(npc))$. Note that $G_c(n,p)$ is a random graph over $n$ nodes where every edge is present with probability $p$ and has capacity $c$ if present.

\begin{thm}
Consider a $G_c(n,p)$ graph where $p > 8(\ln cn)/n$ (just slightly more than the connectivity threshold of $2\ln n/n$). Let $\calM(G_c(n,p), \L)$ be an ergodic Markov chain induced by a symmetric transaction matrix $\L$ over nodes in $G_c(n,p)$. Fix a vertex $v$.  The steady-state probability that $v$ goes bankrupt is $\T(1/(npc))$.
\end{thm}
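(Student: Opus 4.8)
The upper bound is free: Theorem~\ref{thm:bankruptcy-prob} gives bankruptcy probability at most $1/(d_v+1)$ with $d_v=c\cdot\deg_G(v)$, and since $\deg_G(v)$ is a sum of $n-1$ independent $\mathrm{Bernoulli}(p)$ variables it equals $\T(np)$ with high probability, so the bankruptcy probability is $O(1/(npc))$ w.h.p. For the matching lower bound, by Corollary~\ref{cor:unif-ss-dist} together with the forest/score-vector correspondence \cite{kleitman:forests-scorevectors}, the steady-state bankruptcy probability of $v$ equals $F(G_{-v})/F(G)$, where $F(\Gamma):=\sum_{\mathcal F\text{ forest of }\Gamma}c^{|\mathcal F|}$ counts subforests of the multigraph obtained by replacing each capacity-$c$ edge by $c$ parallel copies. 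Thus it suffices to show $F(G)=O(npc)\cdot F(G_{-v})$. Writing $H:=G_{-v}$ and classifying a forest $\mathcal F$ of $G$ by its edges at $v$ (which must join $v$ to pairwise distinct components of $\mathcal F\cap H$), one gets
\[
\frac{F(G)}{F(G_{-v})}=\expect_{\mathcal F'}\Bigl[\prod_{K\text{ comp.\ of }\mathcal F'}\bigl(1+c\,|K\cap N(v)|\bigr)\Bigr],
\]
where $\mathcal F'$ is drawn from the forest law $\prob[\mathcal F']\propto c^{|\mathcal F'|}$ on $H$. The key point is that the edge set at $v$ is independent of $H$ (each vertex lies in $N(v)$ independently with probability $p$), so conditioning on the partition of $V(H)$ into components of $\mathcal F'$ and averaging over $N(v)$ yields $\expect_{N(v)}[F(G)/F(G_{-v})\mid H]=\expect_{\mathcal F'}[\prod_K(1+cp\,|K|)]$. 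Hence, modulo a concentration step transferring this identity-in-expectation to a statement holding w.h.p.\ over $G$, it suffices to prove
\[
\expect_{\mathcal F'}\Bigl[\prod_{K}(1+cp\,|K|)\Bigr]=O(npc)\qquad\text{w.h.p.\ over }H. \tag{$\ast$}
\]

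I would prove $(\ast)$ by showing the forest measure $\mu_c$ on $H$ is concentrated on a single giant component. Splitting off the largest component $K_{\max}$ (of size $\le n$), $(\ast)$ reduces to $\expect_{\mathcal F'}[\prod_{K\ne K_{\max}}(1+cp\,|K|)]=O(1)$; and since for a fixed excess $r(\mathcal F'):=\sum_{K\ne K_{\max}}|K|$ and fixed component count $k(\mathcal F')$ this product is maximized when one non-maximal component carries all of $r$ and the rest are singletons, it is at most $(1+cp\,r(\mathcal F'))\,(1+cp)^{k(\mathcal F')-2}$; so it suffices that $\expect[(1+cp\,r)(1+cp)^{k-2}]=O(1)$, i.e.\ that $k(\mathcal F')$ and $r(\mathcal F')$ are small with suitably decaying tails. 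This is where the hypothesis $p>8(\ln cn)/n$ enters: it guarantees that w.h.p.\ $H$ is a good expander, $|\partial_H S|\ge\tfrac14 np\min(|S|,|V(H)|-|S|)$ for all $S$, and stays connected after deleting a few vertices. From the counting identity $(|V(H)|-k+1)M_{k-1}=\sum_{\mathcal F:\,k\text{ comps}}\mathrm{cross}(\mathcal F)$ (with $M_k$ the number of spanning forests of $H$ with $k$ components and $\mathrm{cross}(\mathcal F)$ the number of $H$-edges between distinct components of $\mathcal F$), expansion forces $\mathrm{cross}(\mathcal F)\ge\tfrac14 np\,(k-1)\,|V(H)|$, which after re-weighting by $c^{-k}$ gives $a_k\le\frac{4/(pc)}{\,k-1\,}a_{k-1}$ and controls $k(\mathcal F')$; a refinement tracking component sizes, combined with deletion--contraction estimates for $\mu_c$ (for adjacent $u,u'$ one has the exact identity $\prob_{\mu_c}[u\not\sim u']=F(H/uu')/F(H)$, which for expanders is $\T(1/(npc))$ via effective-resistance bounds), controls $r(\mathcal F')$.

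\textbf{Main obstacle.} The crux is this last step — quantitative ``giant component'' control of $\mu_c$ on a near-threshold expander. Near the threshold $cp$ can be as small as $\T(\ln(cn)/n)$, so $\mathcal F'$ genuinely has $\T(1/(pc))$ components, most of them tiny; first-moment or union bounds over possible small components, or over cuts (``$\mathcal F'$ avoids all of $\partial_H S$''), are hopelessly lossy because there are exponentially many such $S$, and the clean bound $\prob_{\mu_c}[u\not\sim u']\le 1/c$ available for a direct edge degrades to a useless $O(\ell/c)$ along a length-$\ell$ path, with $\ell\approx\ln n/\ln\ln n$ at the threshold. What rescues the argument is that a non-maximal component of size $s$ contributes only the factor $(1+cps)$ rather than an exponential in $s$, so $\sum_{K\ne K_{\max}}\ln(1+cp|K|)\le cp\,r(\mathcal F')=O(1)$ once $r(\mathcal F')=O(1/(pc))$ is shown with an exponentially decaying tail at that scale. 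Making this tail estimate rigorous — for which one needs to exploit the $c^{|\mathcal F'|}$ weighting and the expansion of $G(n,p)$ beyond crude moment computations — is the heart of the proof, and also the part where the precise formulation (in expectation versus with high probability) must be handled with care.
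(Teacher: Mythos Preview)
Your upper bound—Theorem~\ref{thm:bankruptcy-prob} plus a Chernoff bound on $d_v$—is exactly the paper's proof, and in fact is the paper's \emph{entire} proof: the paper bounds $\expect_G[1/(d_v+1)]$ by splitting on the event $\{d_v<npc/2\}$, uses $\prob[d_v<npc/2]\le e^{-np/8}\le 1/(cn)$ from the hypothesis $p>8(\ln cn)/n$, and concludes $\expect_G[1/(d_v+1)]\le 1/(cn)+2/(npc)=O(1/(npc))$. No matching $\Omega(1/(npc))$ lower bound on the bankruptcy probability is proved; since Theorem~\ref{thm:bankruptcy-prob} is only an upper bound, showing $\expect_G[1/(d_v+1)]=\T(1/(npc))$ does not yield the lower direction, and the $\T$ in the theorem statement is left unjustified in the paper.

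Everything from your forest identity $F(G)/F(G_{-v})=\expect_{\mathcal F'}\bigl[\prod_K(1+c\,|K\cap N(v)|)\bigr]$ onward therefore goes well beyond what the paper does. The overall strategy (average over $N(v)$, then control the product via the component structure of a $\mu_c$-random forest on an expander) is reasonable, but one step is wrong as written: for fixed $r=\sum_{K\ne K_{\max}}|K|$ and a fixed number of non-maximal components, $\prod_{K\ne K_{\max}}(1+cp\,|K|)$ is \emph{maximised} at equal sizes (concavity of $\log$), not at the corner—e.g.\ with two non-maximal components, $cp=1$, $r=100$, your claimed bound gives $(1+100)(1+1)=202$ while the actual maximum is $51^2=2601$. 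The separate control of $k(\mathcal F')$ is thus moot; what you actually need is the simpler inequality $\prod_{K\ne K_{\max}}(1+cp\,|K|)\le e^{cp\,r(\mathcal F')}$ that you yourself invoke in the ``Main obstacle'' paragraph, after which the whole argument rests on exponential tails for $r(\mathcal F')$ at scale $1/(cp)$ under $\mu_c$ on a near-threshold $G(n,p)$—which, as you correctly flag, is the hard and unfinished step.
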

\begin{proof}
From Theorem~\ref{thm:bankruptcy-prob}, we know that the bankruptcy probability of $v$ is at most $1/(d_v + 1)$. We need to show that in expectation this term is $\T(1/(npc))$. By Chernoff bounds, $\prob[d_v \le npc/2] \le e^{-np/8} \le e^{-\ln cn} = 1/(cn)$, which bounds $\expect_G[1/(d_v+1)]$ as follows:
\begin{align*}
\expect_G[1/(d_v+1)] & \le \prob[d_v < npc/2]\cdot 1 + \prob[d_v \ge npc/2]\frac{1}{npc/2}\\
& \le \frac{1}{cn} + \frac{2}{npc} = \T\left(\frac{1}{npc}\right)
\end{align*}
\end{proof}

We conjecture that the steady-state failure probability of transactions between nodes in a $G_c(n,p)$ network under a symmetric transaction regime is also $\T(1/(npc))$. This is based on heuristic calculations using the generating functions for complete graphs with $pc$ instead of $c$, as well as on simulation results (see Section~\ref{sec:var-cc}).
\begin{conj}\label{conj:gnp}
Let $G_c(n,p)$ be a connected credit network (\ie, $p > \ln n/n$) and let $\calM(G_c(n,p), \L)$ be an ergodic Markov chain induced by a symmetric transaction matrix $\L$ over nodes in $G_c(n,p)$. Then, the steady-state transaction failure probability in $G_c(n,p)$ is $\T(1/npc)$.
\end{conj}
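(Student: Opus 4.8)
The plan is to follow the route used for complete graphs. Under a symmetric $\L$, Corollary~\ref{cor:unif-ss-dist} makes $\calM(G_c(n,p),\L)$ uniform over the cycle-reachability classes of $G_c(n,p)$, which by Lemma~\ref{lem:equiv-class-score-vectors} and the forest/score-vector bijection are in bijection with the subforests of the $c$-thickened graph. Writing $F(H)$ for the number of subforests of a graph $H$ (edges counted with their credit multiplicities), the goal becomes: for almost every $G\sim G_c(n,p)$,
\[
\sum_{s,t}\l_{st}\,\frac{N_{st}(G)}{F(G)}=\T\!\left(\frac{1}{npc}\right),
\]
where $N_{st}(G)$ is the number of classes in which $(s,t)$ is infeasible (well-defined by Lemma~\ref{lem:feasibility}).

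The first step is a combinatorial description of $N_{st}(G)$. By max-flow/min-cut, $(s,t)$ is infeasible in a class exactly when there is a \emph{tight cut}: a set $A$ with $t\in A$, $s\notin A$, across which every unit of credit points into $A$; in terms of the generalized score vector $\sigma$ this says $\sum_{i\in A}\sigma_i=c(A)$, where $c(A)$ is the total credit capacity of edges inside $A$ (which is the minimum $\sum_{i\in A}\sigma_i$ can be, so ``tight'' is apt). Fixing such an $A$ freezes the cut edges and the configuration splits, so the key observation is that the number of classes with a tight cut at $A$ is exactly $F(G[A])\,F(G[V\setminus A])$. A union bound over cuts then gives
\[
\frac{N_{st}(G)}{F(G)}\ \le\ \sum_{A:\,t\in A,\ s\notin A}\frac{F(G[A])\,F(G[V\setminus A])}{F(G)},
\]
and the single term $A=V\setminus\{s\}$ is exactly $F(G-s)/F(G)=\prob[s\text{ bankrupt}]$.

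The lower bound is then immediate: keeping only the term $A=V\setminus\{s\}$ shows $N_{st}(G)/F(G)\ge\prob[s\text{ bankrupt}]$, and since degrees in $G_c(n,p)$ concentrate ($d_v=\T(npc)$ for every $v$ w.h.p.), the bankruptcy bound for $G_c(n,p)$ proved above gives $\prob[s\text{ bankrupt}]=\T(1/npc)$ uniformly in $s$; averaging against $\L$ preserves this. For the upper bound the plan is to estimate $F(H)$ for induced subgraphs $H$ of $G$ via the Matrix--Tree theorem together with concentration of the Laplacian spectrum of $G_c(n,p)$ (eigenvalues clustering around $|V(H)|\,pc$ for dense induced subgraphs), which should make $F(G[A])F(G[V\setminus A])/F(G)$ decay rapidly as soon as either side of the cut grows past a single vertex; the remaining few ``small-side'' cuts around $s$ or $t$ are handled directly and contribute the $\T(1/npc)$ main term. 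This is exactly the heuristic in the text: the subforest generating polynomial of $G_c(n,p)$ should concentrate around that of $K_{n;pc}$.

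Two obstacles keep this at the level of a conjecture. First, the crude union bound over the $2^{n-2}$ cuts overcounts severely: when $s$ is bankrupt, exponentially many cuts are simultaneously tight, so one must instead condition on the \emph{minimal} tight set containing $t$ and avoiding $s$ and show its size has a geometric tail — an inclusion--exclusion that is delicate because tight sets do not organize into an obviously exploitable lattice. Second, and more seriously, one needs sharp two-sided control of $F(H)$ \emph{simultaneously} for all induced subgraphs $H$ of $G_c(n,p)$, including the sparse and possibly disconnected small ones that appear near the connectivity threshold, and a concentration result for subforest counts of random graphs of the required strength does not seem to be known; this is the crux and the reason the statement is left as Conjecture~\ref{conj:gnp}.
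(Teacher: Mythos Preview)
The paper does not prove this statement: it is explicitly left as a conjecture, justified only by ``heuristic calculations using the generating functions for complete graphs with $pc$ instead of $c$'' and by the simulation evidence in Section~\ref{sec:sim}. There is thus no proof to compare against, and your proposal is appropriately framed as a plan that ends by naming the obstructions rather than as a completed argument.

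Your route---transplanting the cut decomposition from the proof of Theorem~\ref{thm:complete-graph} to $G_c(n,p)$---is considerably more detailed than what the paper offers. The combinatorial setup is correct: a transaction $(s,t)$ fails exactly when some separating set $A\ni t$, $s\notin A$ is tight, tightness is equivalent to $\sum_{i\in A}\sigma_i=c(A)$, and the number of equivalence classes with a prescribed tight cut factors as $F(G[A])\,F(G[V\setminus A])$. The paper's heuristic amounts to pretending $G_c(n,p)$ is $K_{n;pc}$ and reading off the answer; your Matrix--Tree/Laplacian-concentration idea is the natural way one would try to make that substitution rigorous, and the two obstacles you name---the exponential overcounting in the union bound over cuts, and the absence of uniform two-sided control on subforest counts $F(H)$ for \emph{all} induced subgraphs of a random graph---are exactly what block a proof.

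One caveat on the lower bound. You write that ``keeping only the term $A=V\setminus\{s\}$'' in the union bound gives $N_{st}/F\ge\prob[s\text{ bankrupt}]$; the conclusion is right, but the reason is not that one term of an upper bound survives---it is that the classes with $s$ bankrupt are genuinely a subset of the classes where $(s,t)$ fails. More substantively, you then invoke $\prob[s\text{ bankrupt}]=\Theta(1/npc)$, but the paper's $G_c(n,p)$ bankruptcy theorem only \emph{proves} the $O(1/npc)$ direction (the displayed argument is a Chernoff bound on $1/(d_v+1)$); the matching $\Omega$ bound is asserted in the statement but not argued. So even the $\Omega(1/npc)$ half of the conjecture is, strictly, not established in the paper, and your lower-bound step inherits that gap.
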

As with complete graphs, we will also compute this probability in an equivalent centralized model and argue that this conjecture, if true, would imply that the failure probability in $G_c(n,p)$ credit networks is within a constant factor of the equivalent centralized model.

\subsubsection{Barab\'{a}si-Albert Preferential Attachment Networks}
We state a conjecture regarding the steady-state transaction failure probability in power-law graphs constructed using the Barab\'{a}si-Albert preferential attachment model. The Barab\'{a}si-Albert model is a evolving random graph model where each arriving node creates edges to existing nodes with probability proportional to their degrees. It is parameterized by the number of nodes in the network, $n$, and the number of edges $d$ that each new node creates.
\begin{conj}\label{conj:ba}
Let $G^{BA}_c(n, d)$ be a random graph over $n$ nodes created using the Barab\'{a}si-Albert preferential attachment model where each arriving node creates $d$ edges and each edge has total credit capacity $c$. Let $\calM(G^{BA}_c(n, d), \L)$ be an ergodic Markov chain induced by a symmetric transaction matrix $\L$ over nodes in $G^{BA}_c(n,d)$. Then, the steady-state  transaction failure probability in $G^{BA}_c(n, d)$ is $\T(1/dc)$.
\end{conj}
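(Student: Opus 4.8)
The plan is to mirror the route used for complete graphs (Theorem~\ref{thm:complete-graph}) and for $G_c(n,p)$, but with the degree sequence and cut structure of the preferential-attachment model playing the role that vertex-transitivity and degree concentration played there. By Corollary~\ref{cor:unif-ss-dist} the steady state is uniform over the cycle-reachable equivalence classes, which are in bijection with the subforests of the multigraph obtained by splitting each capacity-$c$ edge into $c$ unit edges; write $F(\cdot)$ for this subforest count as in the proof of Theorem~\ref{thm:bankruptcy-prob}. Hence the steady-state failure probability of a transaction $(s,t)$ is exactly the fraction of subforests in which a directed zero cut separates the payer $s$ from the payee $t$, and the quantity to estimate is $1-(\#\text{good subforests})/F(G^{BA}_c(n,d))$, averaged over transacting pairs drawn from $\L$ and over the random graph. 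The first step is to argue, as for $K_{n;c}$, that the dominant contribution to failure is an endpoint being \emph{bankrupt}: conditioned on neither endpoint being bankrupt, a failing transaction forces a nontrivial directed cut, and the number of subforests realizing such a cut can be bounded using the expansion/connectivity properties of BA graphs for $d\ge 2$ (logarithmic diameter, no small sparse cuts), showing this contribution is $o(1/dc)$. (For $d=1$ the graph is a random recursive tree and Theorem~\ref{thm:tree-unif} applies directly.)

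For the bankruptcy term, the structural fact to exploit is that $G^{BA}_c(n,d)$ has a power-law degree distribution with exponent $3$, so the minimum degree is $d$ and a constant fraction of the nodes have degree $\T(d)$ — indeed $\expect_v[\deg v]=\T(d)$ and $\expect_v[1/(\deg v)]=\T(1/d)$ for a uniformly random node $v$. Applying Theorem~\ref{thm:bankruptcy-prob} gives that such a node is bankrupt with probability at most $1/(d_v+1)=O(1/dc)$, which yields the upper bound. For the matching lower bound I would establish $F(G)/F(G_{-v})=O(d_v)$ for a typical low-degree $v$: the $d$ edges $v$ creates on arrival go to $d$ distinct and (for a random BA node) typically pairwise non-adjacent earlier nodes, so a subforest of $G_{-v}$ extends to a subforest of $G$ in only $O(d_v)$ acyclic ways, the ``two of $v$'s neighbors already lie in the same tree'' obstruction being rare because the local clustering coefficient at a random BA node vanishes. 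This gives $F(G_{-v})/F(G)=\Omega(1/dc)$ for a constant fraction of vertices, hence failure probability $\Omega(1/dc)$ whenever $\L$ is not pathologically concentrated on the hubs; for a fully arbitrary symmetric $\L$ one should state the bound for the average transaction (equivalently, a uniformly random payer), matching the way the Erd\H{o}s--R\'enyi statement fixes a vertex.

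The main obstacle is the count $F(G^{BA}_c(n,d))$ itself. For complete graphs the symmetry produces a clean generating-function expression, and for $G_c(n,p)$ a single Chernoff bound on one degree suffices inside Theorem~\ref{thm:bankruptcy-prob}; neither shortcut survives here, since the BA graph is neither vertex-transitive nor degree-concentrated, and the subforest enumeration is exponential in the number of edges and intricately correlated with the attachment history. Concretely one must rule out that the $n^{\T(1)}$-degree hubs distort $F(G)$ or the cut counts, which I expect to require either a martingale/concentration argument for $\log F(G^{BA}_c(n,d))$ along the attachment process, or a coupling with a configuration-model graph of the same degree sequence for which the forest count is more tractable. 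Making that step rigorous — rather than heuristically transplanting the complete-graph generating function with $c$ replaced by $dc$, which is all the present evidence amounts to — is the crux, and is why the statement is left as a conjecture.
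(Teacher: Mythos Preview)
The paper does not prove this statement: it is stated as a conjecture, with the only supporting evidence being the simulation results in Section~\ref{sec:sim} and the analogy with the centralized system. There is therefore no proof in the paper to compare your proposal against.

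That said, your outline is a sensible attack. The upper bound via Theorem~\ref{thm:bankruptcy-prob} together with $\expect_v[1/(d_v+1)]=\T(1/dc)$ for the BA degree sequence is essentially immediate and would already match the node-bankruptcy bound the paper proves for $G_c(n,p)$. Your lower-bound idea---showing $F(G)/F(G_{-v})=O(d_v)$ for a constant fraction of degree-$\T(d)$ vertices by arguing that their $d$ neighbors typically lie in distinct trees of a random subforest---is plausible but not yet a proof: the claim that ``two of $v$'s neighbors already lie in the same tree'' is rare requires control over the component structure of a uniformly random subforest of a BA graph, which is not obviously implied by low local clustering alone.

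The genuine gap, which you correctly flag, is the cut term. For $K_{n;c}$ the paper bounds the contribution of nontrivial cuts by an explicit sum over $k$ using the asymptotic $a_{n,c}\sim e^{1/2c}c^{n-1}n^{n-2}$; nothing comparable is available for $F(G^{BA}_c(n,d))$, and ``no small sparse cuts'' in BA graphs is itself delicate (the $d$ youngest vertices form a set with only $\T(d^2)$ outgoing edges, so edge expansion is not uniformly large). Your closing paragraph is accurate: turning this into a theorem would require either concentration for $\log F$ along the attachment process or a reduction to a model where forests can be counted, and neither is carried out here or in the paper.
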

Note that we conjecture the transaction failure probability to be independent of network size $n$. This conjecture is supported by simulation results (see Section~\ref{sec:sim}).

\subsection{Centralized Currency Infrastructure}
Since we introduced credit networks as an alternative to a centralized currency infrastructure, a natural question arises:  how does the steady-state failure probability in a credit network compare with that in a centralized infrastructure? Next we address this question.

As described previously, a centralized currency infrastructure can be modeled as a tree with a bank at its root and individuals as leaves. However, this setup is different from a star network in that the root does not participate in any transactions; all transactions occur between leaf nodes, \ie, along paths of length two. Let the centralized infrastructure be represented by a network $\calG^*$ with a root node $r$ representing the bank, and $n$ leaf nodes representing individuals. Let $c_{ru} < \infty$ be the credit extended by bank $r$ to node $u$. Note that since the bank issues a common currency a leaf node trusts the bank for an infinite sum of money, \ie, $c_{ur} = \infty$. A successful transaction of unit value leads to an increase by one of the payee's credit and a decrease by one of the payer's credit; the total credit extended to leaf nodes in the system stays constant. We can prove that if the total credit extended to leaf nodes is $C$, then the steady-state transaction failure probability in a centralized currency system is $(n-1)/(C+n-1)$.
\begin{thm}\label{thm:central-ss}
Let $\calM(\calG^*, \L)$ be an ergodic Markov chain induced by a symmetric transaction matrix $\L$ over a centralized payment network $\calG^*$ with $n$ leaf nodes. Let $C$ be the total credit extended to the leaf nodes. Then the steady-state transaction failure probability in $\calM$ is  $(n-1)/(C+n-1)$.
\end{thm}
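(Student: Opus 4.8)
The plan is to reduce $\calM(\calG^*,\L)$ to a random walk on integer compositions of $C$, show its transition matrix is symmetric (hence has the uniform distribution as its steady state), and finish with an elementary binomial count.

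First I would pin down the state space. Every transaction is between two leaves $s,t$ and has the unique routing path $s\to r\to t$; since $c_{ur}=\infty$ for each leaf $u$, the leaf-to-root edges never bind and are left unchanged (up to the harmless $\infty\pm1=\infty$). Hence a state of $\calG^*$ is completely described by the vector $\vec a=(c_{r1},\dots,c_{rn})$ of the bank's credit to each leaf. A successful transaction decrements one coordinate and increments another, so $\sum_u c_{ru}=C$ is invariant; assuming, as ergodicity of $\calM$ entails, that the transaction graph $\{(s,t):\l_{st}>0\}$ is connected, the reachable states are exactly the nonnegative integer vectors summing to $C$, i.e.\ the $\binom{C+n-1}{n-1}$ compositions of $C$ into $n$ parts. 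Because credit flows opposite to payment and the leaf-to-root edges are infinite, the transaction $(s,t)$ is feasible in state $\vec a$ iff the leg $s\to r$ can carry it, i.e.\ iff $a_s\ge1$; it then sends $\vec a\mapsto\vec a-e_s+e_t$, and if $a_s=0$ the transaction fails and the chain self-loops.

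Second I would show the transition matrix $P$ on this state space is symmetric. For distinct states with $\vec b=\vec a-e_s+e_t$ we have $P(\vec a,\vec b)=\l_{st}$; the reverse move $\vec b\mapsto\vec a=\vec b-e_t+e_s$ is the transaction $(t,s)$, which is feasible in $\vec b$ because the $t$-coordinate of $\vec b$ equals $a_t+1\ge1$, and it occurs with probability $\l_{ts}=\l_{st}$ since $\L$ is symmetric. Thus $P=P^\top$, so $P$ is doubly stochastic and the uniform distribution over the $\binom{C+n-1}{n-1}$ states is stationary; by ergodicity it is the unique steady-state distribution. (This is the infinite-capacity-star analogue of Theorem~\ref{thm:tree-unif}/Corollary~\ref{cor:unif-ss-dist}: the star is a tree, so cycle-reachable classes are singletons, but the infinite capacities and the leaves-only restriction make the direct argument cleanest.)

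Third, the count. In the uniform steady state the probability that a fixed leaf $s$ has $a_s=0$ is the fraction of compositions of $C$ into $n$ parts whose $s$-th part vanishes, namely $\binom{C+n-2}{n-2}/\binom{C+n-1}{n-1}=(n-1)/(C+n-1)$, independent of $s$. Since $(s,t)$ fails exactly when $a_s=0$, the steady-state failure probability is $\sum_{s\ne t}\l_{st}\,\prob[a_s=0]=\tfrac{n-1}{C+n-1}\sum_{s\ne t}\l_{st}=\tfrac{n-1}{C+n-1}$, using $\sum_{s,t}\l_{st}=1$. There is no genuinely hard step here; the only point needing care is getting the reduction right — that the state collapses to $\vec a$ with the feasibility rule $a_s\ge1$, and, relatedly, the observation that makes $P$ symmetric, namely that a forward move always increments the payee's coordinate so the reverse move is automatically feasible. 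Everything afterward is bookkeeping with binomial coefficients.
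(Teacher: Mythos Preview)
Your proposal is correct and follows essentially the same approach as the paper: identify the state space with the $\binom{C+n-1}{n-1}$ compositions of $C$ into $n$ nonnegative parts, argue the stationary distribution is uniform, and then compute the bankrupt-payer probability as $\binom{C+n-2}{n-2}/\binom{C+n-1}{n-1}$. The only notable difference is that the paper obtains uniformity by invoking Corollary~\ref{cor:unif-ss-dist}, whereas you establish the symmetry of $P$ directly (which is arguably cleaner here, since the infinite leaf-to-root capacities sit awkwardly in the general framework); you are also more explicit than the paper in the final averaging step $\sum_{s\ne t}\l_{st}\,\prob[a_s=0]$.
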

\begin{proof}
Since the number of leaf nodes is $n$ and the total credit extended to leaf nodes is $C$, the Markov chain has $\binom{C + n -1}{n-1}$ states: this is the number of ways of distributing $C$ objects among $n$ people. In the centralized system, unlike credit networks, a transaction fails if and only if the payer in the transaction, say node $u$, is bankrupt, \ie, $c_{ru} = 0$. Therefore, the transaction failure probability is equal to the probability that a node is bankrupt. The number of states in which a given node is bankrupt is given by $\binom{C + n - 2}{n-2}$: this is the number of ways of distributing the total credit among the remaining $n-1$ leaf nodes. From Corollary~\ref{cor:unif-ss-dist}, we know that $\calM$ has a uniform steady-state distribution over its states. Therefore, the steady-state probability that a node is bankrupt is given by
\[
\frac{\binom{C + n - 2}{n-2}}{\binom{C + n - 1}{n-1}}  = \frac{n-1}{C+n-1} 
\]
\end{proof}
So if $\bar{c} = C/n$ is the average credit extended to a node in $\calG^*$, then the node bankruptcy probability of a node (as well as the transaction failure probability) is $\T(1/\bar{c})$. Note that since the Markov chain has a uniform distribution over states of the network, the steady-state transaction failure probability is independent of the initial state (\ie, the initial credit of each node in the network).

\subsubsection{Equivalence between Credit Networks and Centralized Currency Infrastructure}
In order to be able to compare liquidity in the two models, we need to define the notion of equivalence between the two models. Given a credit network $G = (V,E)$ with $n$ nodes, we construct an equivalent centralized currency infrastructure $\calG(G)$ as follows: $\calG$ has nodes $V \cup \{r\}$ (where $r$ is the root node) and a bidirectional edge between each node $u$ and the root $r$. For each node $u$ in $G$, the total credit extended to it by the bank in the centralized model, $c_{ru}$, is given by $c_{ru} = \sum_{v\in V} c_{vu}$ where $c_{vu}$ is the initial credit extended by $v$ to $u$ before the nodes start transacting. And finally, for all nodes $u$, we set $c_{ur} = \infty$. This ensures that each node has access to the same amount of credit in both models. 

\subsubsection{Liquidity Comparison}
The procedure to convert a credit network in to an equivalent centralized currency system allows us to compare liquidity in credit networks with that in an equivalent currency system. For the credit network topologies that we analyzed in Section~\ref{subsec:ss-analysis}, we state the steady-state failure probability for their equivalent centralized systems under a symmetric, ergodic transaction regime by invoking Theorem~\ref{thm:central-ss}. 

\paragraph{Cycle Graph} For a cycle graph with $n$ nodes where each edge has capacity $c$, the total capacity is $nc$. So, the steady-state failure probability in the equivalent centralized system is $(n-1)/(nc+n-1) = \T(1/c)$. On the other hand, in a uniform transaction regime, credit networks on cycle graphs have a steady-state failure probability of $1 - \T(c/n)$ (see Appendix~\ref{app:cycles}). However, for graphs with high expansion, the failure probability in credit networks is within a constant factor of that in equivalent centralized systems (see below).

\paragraph{Complete Graph} For a complete graph on $n$ nodes where the total credit capacity between any pair of nodes is $c$, the total credit capacity in the system is $Nc$, where $N = \binom{n}{2}$. So, the steady-state failure probability in the equivalent centralized system is $(n-1)/(Nc+n-1) = \T(1/nc)$, the same as that in the original credit network (Theorem~\ref{thm:complete-graph}). 

\paragraph{Erd\"{o}s-R\'{e}nyi Graph} For an Erd\"{o}s-R\'{e}nyi network, $G_c(n,p)$, the expected total credit capacity in the network is $Npc$, where $N = \binom{n}{2}$. Therefore, the steady-state failure probability in the equivalent centralized system is $(n-1)/(Npc+n-1) = \T(1/npc)$. Thus, if Conjecture~\ref{conj:gnp} is true, the failure probability in $G_c(n,p)$ credit networks would be within a constant factor of that in the equivalent centralized system.

\paragraph{Barab\'{a}si-Albert Graph} For a Barab\'{a}si-Albert preferential attachment network, $G^{BA}_c(n, d)$, the total credit capacity in the network is $ndc$ (\ie, $nd$ edges each having total credit $c$). Therefore, the steady-state failure probability in the equivalent centralized system is $(n-1)/(ndc+n-1) = \T(1/dc)$. Like  Erd\"{o}s-R\'{e}nyi networks, if the conjecture about steady-state failure probability for  $G^{BA}_c(n, d)$ credit networks (Conjecture~\ref{conj:ba}) is true, then, credit networks with this topology too will have a steady-state failure probability within a constant factor of that in an equivalent centralized system.

\section{Simulations}\label{sec:sim}
Next we present the results of simulating repeated transactions on credit networks from two well-studied families of random graphs: $G_c(n,p)$ (this is a Erd\"{o}s-R\'{e}nyi random graph with edge orientation chosen randomly and each edge assigned capacity $c$) and power-law graphs generated using the Barab\'{a}si-Albert preferential attachment (BA) model \cite{barabasi-albert:pa}. The Barab\'{a}si-Albert model is a evolving random graph model where each new node creates $d$ edges to existing nodes with probability proportional to their degrees. After constructing the graph, we assigned edge orientations randomly and set the capacity of each edge to $c$. The goal of the simulations was to understand how the steady-state success probability in these networks   depends on key network parameters and to determine what values of network parameters yield a ``high enough" success probability. In this sense, the qualitative results obtained from running simulations complement the asymptotic analysis described in the previous section. For $G_c(n,p)$ the parameters of interest are $n, p$ and $c$. However, since varying either $n$ or $p$ changes the density of the graph (\ie, average node degree), we also ran simulations where we changed both $n$ and $p$ such that $np$ was held constant. For BA graphs, the parameters of interest are the number of nodes ($n$), the number of edges each node creates ($d$), and credit capacity ($c$) of each edge.

\begin{figure}[h]
\centering
\includegraphics[width=80mm]{./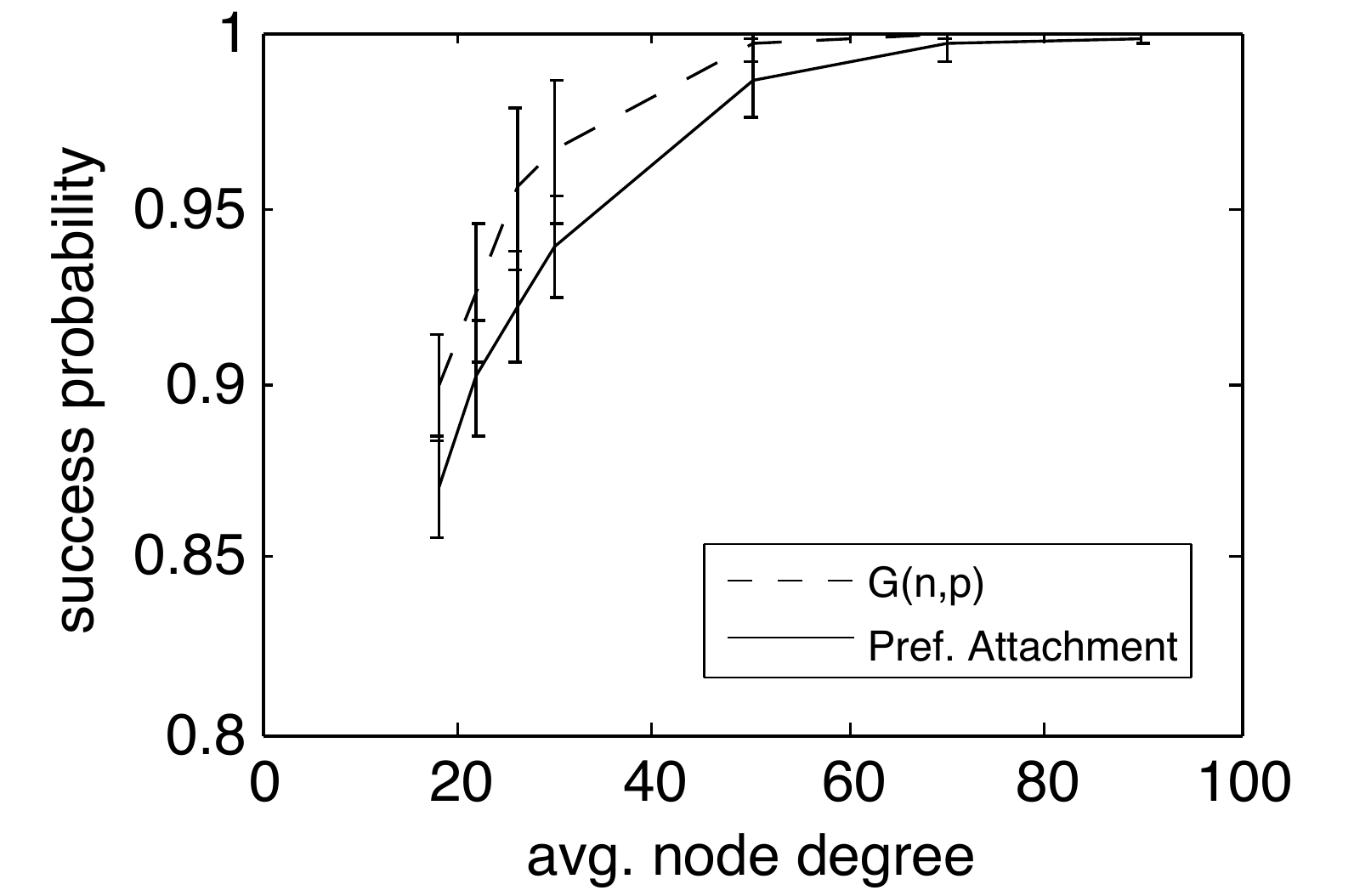}
\caption{Effect of varying graph density (average node degree) on the steady-state success probability in $G_c(n,p)$ and $G^{BA}_c(n, d)$ graphs with n = 200 and c = 1.}
\label{fig:gnp-pa-p}
\end{figure}

A simulation run consisted of constructing a network with a given set of parameters and doing repeated transactions on it. At each time step, we chose a node pair $(s,t)$ with uniform probability and tried to route a unit payment from $s$ to $t$ via the shortest feasible path. If there was a path from $s$ to $t$, we routed the payment and modified edge capacities along the path. Otherwise, we counted the transaction as a failure. We repeated this until the success-rate (i.e fraction of transactions that succeed) in two consecutive time windows was within $\ve$ of each other. We used a window size of 1000 time steps and $\ve = 0.002$. When the process converged, we measured the success rate at convergence, \ie, the total number of successful transactions divided by the total number of time steps. We called this the steady-state success probability for that set of network parameters. For each combination of network parameter values, we ran simulations on 100 graphs constructed from the same distribution and computed the average and the standard deviation in success-rate over the 100 runs. In order to understand how the success probability depends upon various network parameters, we repeated the set of 100 runs for different values of one of the parameters, keeping the rest constant.

\subsection{Effect of Variation in Network Density}

In order to study the effect of density, we fixed $n = 200$ and $c = 1$ for both graph families. For $G_c(n,p)$ we varied $p$ from 0.18 to 0.45 and for BA graphs we varied $d$ from 18 to 45. The average node degree of PA graphs is equal to $2d$ whereas for $G_c(n,p)$ the expected node degree is given by $(n-1)p$. We picked $d$ corresponding to each $p$ such that $2d \approx (n-1)p$ so that we can compare the plots for the two graph families. 

Figure \ref{fig:gnp-pa-p} shows the success probability as a function of average node degree for $G_c(n,p)$ and BA graphs when $np$ is sufficiently larger than $\ln n$ so that the $G_c(n,p)$ graph is connected. The success rate for both graph families is concave, non-decreasing in $p$. Note that for networks of 200 nodes with edge capacity of 1 and an average node degree as low as 25, the success probability for both topologies is in excess of 0.9. Moreover, the success probability quickly reaches a point of diminishing returns where making the network denser does not help significantly. This is useful since it means that, in practice, the network does not need to be very dense in order to route payments with a sufficiently high probability.

\begin{figure}[h]
\centering
\includegraphics[width=80mm]{./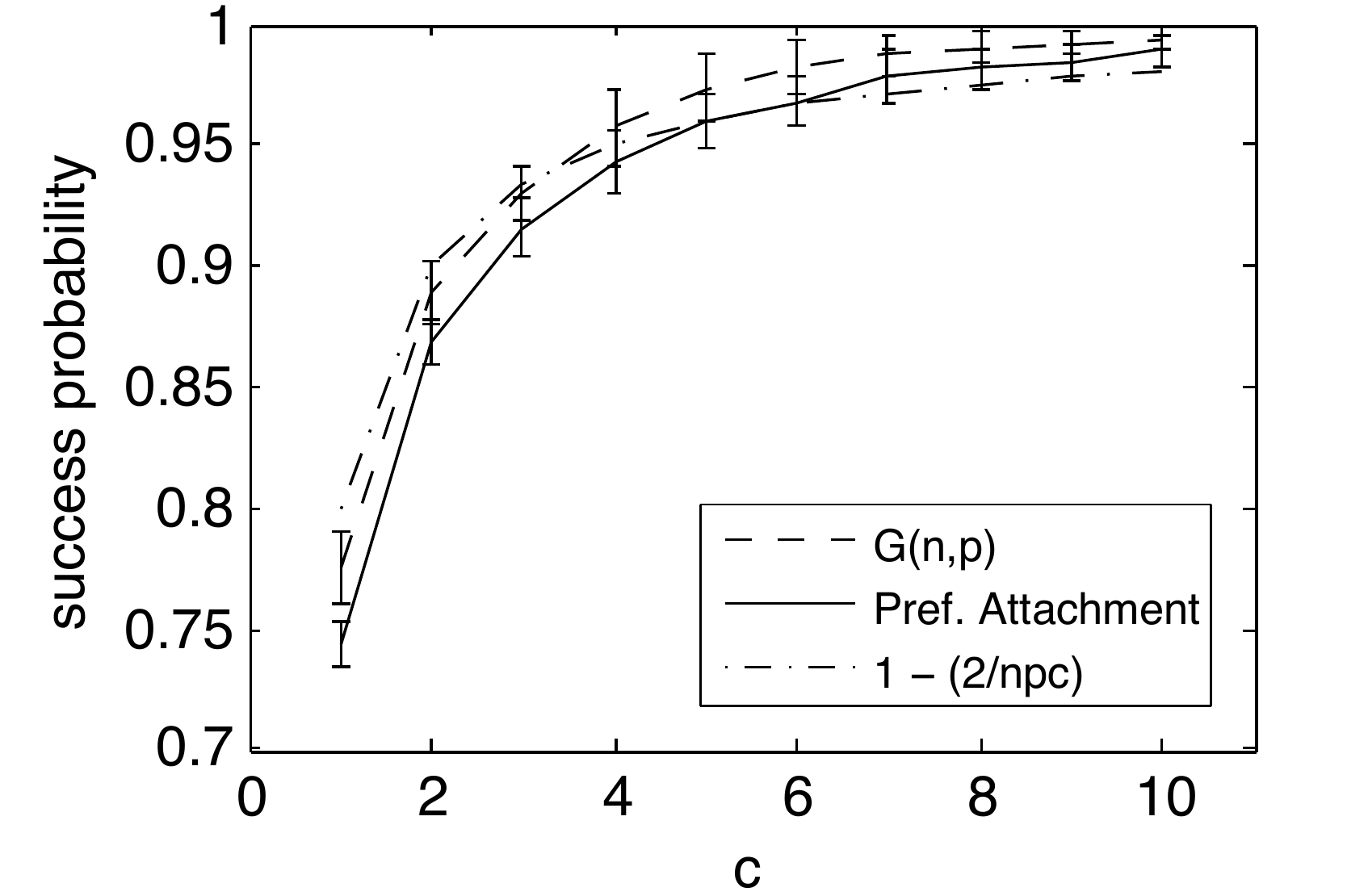}
\caption{Effect of varying credit capacity on the steady-state success probability in $G_c(n,p)$ and $G^{BA}_c(n, d)$ graphs with n = 100 and avg. node degree = 10.}
\label{fig:gnp-pa-c}
\end{figure}

\subsection{Effect of Variation in Credit Capacity}\label{sec:var-cc}

In order to study the effect of varying credit capacity in the network, we fixed $n = 100$ for both graphs. We set $p = 0.10$ for $G_c(n,p)$ and $d = 5$ for BA graphs, thereby ensuring that their densities were equal. This choice of $p$ and $d$ also ensured that all the graphs we constructed were connected. We varied $c$ from 1 to 10. Figure \ref{fig:gnp-pa-c} shows the  mean and standard deviation in the steady-state success-rate for both graph families as we vary $c$. In addition to the two plots, we also plot $1 - (2/npc)$ as a function of $c$. That plot tracks those for $G_c(n,p)$ as well as for BA fairly closely and is further evidence in support of our conjecture that the steady-state failure probability in $G_c(n,p)$ graphs is $\T(1/npc)$ and in $G^{BA}_c(n,d)$ is $\T(1/dc)$ and consequently, credit networks with these topologies are within a constant-factor of the equivalent centralized currency system in terms of liquidity.

\subsection{Effect of Variation in Network Size}
\begin{figure}[h]
\centering
\includegraphics[width=80mm]{./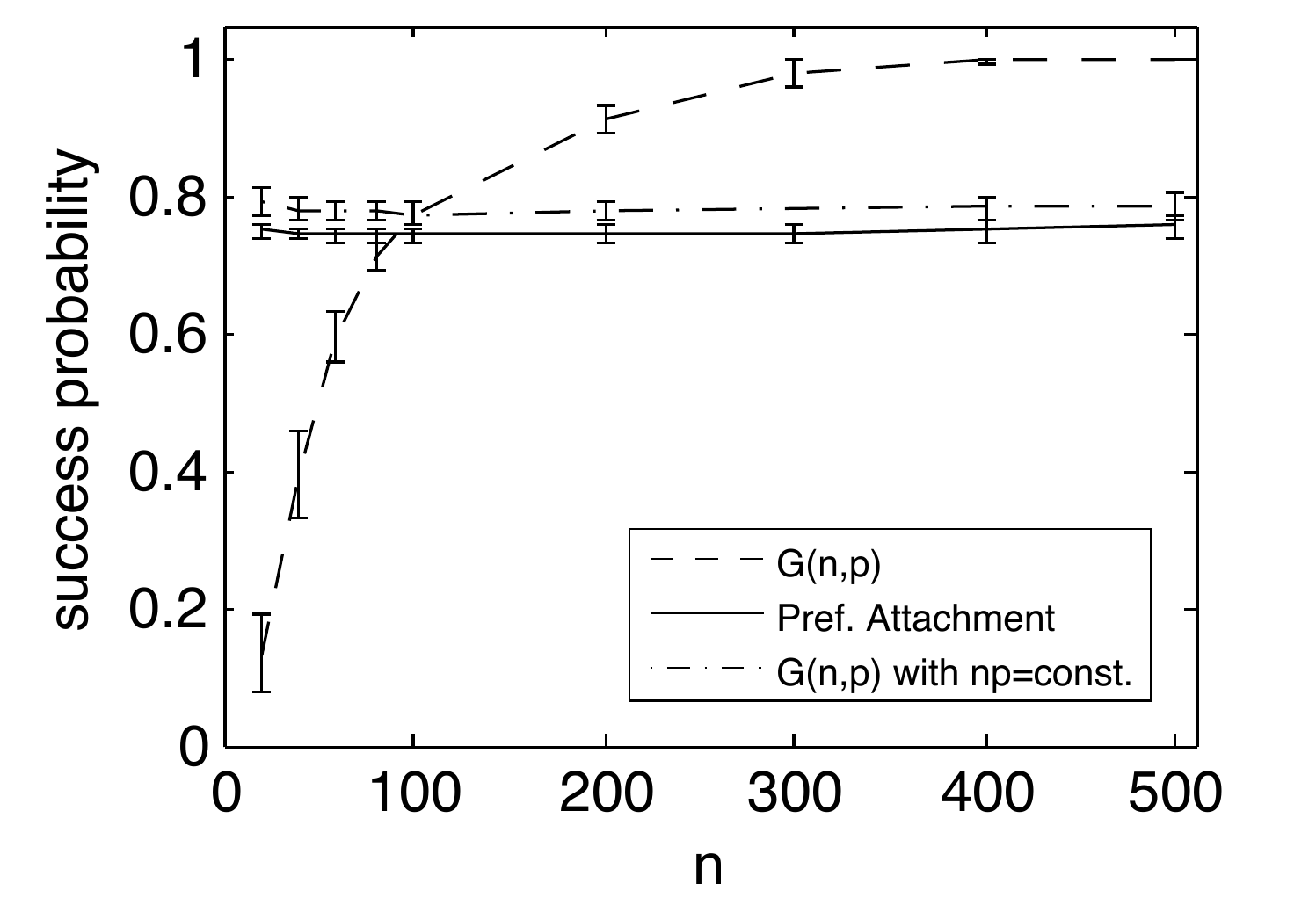}
\caption{Effect of varying network size (number of nodes) on the steady-state success probability in $G_c(n,p)$ and $G^{BA}_c(n, d)$ graphs with c = 1 and avg. node degree = 10.}
\label{fig:gnp-pa-n}
\end{figure}

In order to study the effect of varying network size for PA graphs, we fixed $d = 5$ and $c = 1$ and varied $n$ from 20 to 500. However, the same approach does not work for $G_c(n,p)$ graphs since if we fix $p$ and $c$ and varied $n$, it would also vary the average node degree. So, we ran two sets of simulations for $G_c(n,p)$ graphs: one where we varied $n$ keeping $p = 0.10$ and $c = 1$ and the other where we fixed $c = 1$ and varied both $n$ and $p$  such that $np$ was kept constant at 10.

Figure \ref{fig:gnp-pa-n} shows the effect of varying network size on the success probability of both graph families. The noteworthy observation is that if the density of the network is kept constant, then network size has no effect on the success rate for either graph family; the success rate remains nearly constant at about 0.75 for BA graphs and about 0.78 for $G_c(n,p)$ graphs. The same pattern is also observed if $np$ is held constant at 20 instead of 10 (so it is not an artifact of that specific value of $np$). This is another observation in support of our conjecture about the steady-state failure probability in $G_c(n,p)$ and BA graphs. If, however, $np$ is not held constant, then success probability is again a concave non-decreasing  function of $n$ (much the same way as it is of $p$).

\section{Conclusion \& Open Problems}\label{sec:conclusion}
In this paper we generalized the credit network model by allowing every node to print its own currency, and formulated and studied the question of long-term liquidity in this network under a simple model of repeated transactions. Using the notion of cycle-reachability, we showed that routing payments in credit networks has a \emph{path-independence} property. We also showed that the Markov chain induced by a symmetric transaction regime has a uniform steady-state distribution over the cycle-reachable equivalence classes. We used this fact to derive the node bankruptcy probability in general graphs and the transaction failure probability in a number of network topologies. We showed that, except in cycles, these probabilities are within a constant factor of the corresponding values for an equivalent centralized payment infrastructure. Our analysis and simulations show that for a number of well-known graph families, the steady-state failure probability under reasonable transaction regimes is comparable to that in equivalent centralized currency systems. Thus, in return for the robustness and decentralized properties, this model does not lose much liquidity compared to a centralized model. 

\subsection{Open Problems related to Liquidity}
Our results on liquidity can be extended in a number of ways. One question that arises from our analysis is whether credit network topologies with a high expansion have comparable liquidity with equivalent centralized systems. In general, can the steady state success probability in general networks be expressed in terms of some key network properties such as density or expansion? Another direction is to understand how success-rate and rate of convergence vary under various transaction regimes. A natural regime to study would be where the probability that a node is a payer (payee) is proportional to the node's in-degree (out-degree). We also leave unanswered the question of mixing times of the Markov chain:  our simulations on $G_c(n,p)$ and BA graphs converged relatively quickly but we do not have analytical results on the mixing time of the Markov chain and how it depends on credit capacity and network topology. One can also ask how fault-tolerant credit networks are under various models of node failures. For example, if we remove $k$ nodes (and all their edges) from the network, how would liquidity change in the worst-case, and what network topologies are most robust against such failures? If node failures cascade through the network, how quickly would the network lose liquidity? Another question worth investigating is the effect of transaction routing costs on liquidity. It appears that a mathematical analysis of a transaction routing model where intermediate nodes charged a routing fee would require an entirely new approach since it would invalidate the cycle-reachability relation that forms the basis of our results.

\subsection{Further Understanding the Model}
This model presents a number of interesting directions for further inquiry. The first is the question of pricing of currencies. Let $r_{uv}$ be the exchange ratio for converting $u$'s currency into $v$'s (\ie, 1 unit of $u$'s currency is $r_{uv}$ units of $v$'s currency). The following two properties of these conversion ratios are necessary for the generalized credit network model to be well-defined:
\begin{itemize}
\item \emph{Conservation:} For any $S \subseteq V$, where $S = \{u_1, u_2, \dotsc, u_k\}, \prod r_{u_1u_2}r_{u_2u_3}\dotsc, r_{u_{k-1}u_k}r_{u_ku_1} = 1$. In words, this property means that currency is conserved along any cycle of payments in the network. 
\item \emph{Common Knowledge:} The conversion ratios are common knowledge for all nodes in $V$. This means that nodes along a path through which payment is being routed can compute the outflow given the inflow.
\end{itemize}
However, these properties do not guarantee that it is incentive-compatible for nodes to route payments. For example, it is natural to believe that  the currency of a node that lies on the path between a number of (source, destination) pairs should be more expensive compared to one that has, say, only one neighbor. More generally, can we endow the nodes in the network with a model of rationality that will govern how much credit nodes extend each other, how that changes over time, and how nodes negotiate currency exchange ratios as a function of transaction rates, success probabilities, network topology, etc.?

The second broad question is how to apply this model to various practical applications, some of which we mention in the introduction. Of particular interest are analytical results as well as empirical or simulation-based studies that evaluate the effectiveness of this model in settings such as IP routing, peer-to-peer systems, viral marketing, etc.\ compared to existing models.

Another potential direction is to consider credit networks as enabling a trust-based market where nodes buy and sell goods using trusted currency instead of a common currency. Then we can ask the same questions that have been asked of various traditional market models: Does the market clear? At what prices? Can we efficiently find them? How do equilibrium prices in this model compare with those in models that assume a common currency? Answers to these questions will provide insights into whether and when can we trade off the robustness properties and the decentralized nature of this model with the well-understood characteristics of traditional markets.

\section{Acknowledgements}
This research was supported in part by NSF award IIS-0904325. Part of the research was also sponsored by the Army Research Laboratory and was accomplished under Cooperative Agreement Number W911NF-09-2-0053. The views and conclusions contained in this document are those of the authors and should not be interpreted as representing the official policies, either expressed or implied, of the Army Research Laboratory or the U.S. Government.  The U.S. Government is authorized to reproduce and distribute reprints for Government purposes notwithstanding any copyright notation here on.

\bibliographystyle{alphaurl}
\bibliography{references}

\appendix

\section{Success Probability in Cycles}\label{app:cycles}
\begin{proof}[Proof of Theorem~\ref{thm:cycles}]
Since $\calM$ has a uniform steady-state distribution over the cycle-reachable equivalence classes, the steady-state success probability between a pair of nodes separated by $l$ edges in $G^\circ_{m;c}$ is simply the fraction equivalence classes which allow a transaction between that pair of nodes.

Note that each equivalence class can be characterized by a state in which at least one edge in $G^\circ_{c,m}$ has zero capacity in the counterclockwise direction (any state where all edges are bidirectional is cycle-reachable from one where at least one edge is unidirectional). Assume that the nodes are numbered $1, \dotsc, m$ and the edge between nodes $j$ and $j+1$ is numbered $j$. Let the $j$th edge be the first edge that has zero capacity in the counterclockwise direction. Then the first $j-1$ edges can each be in one of $c$ states, and the remaining $m-j$ edges can each be in one of $c+1$ states. Thus, the total number of states where one or more edges have zero capacity in the counterclockwise direction is given by:
\begin{align*}
\sum_{j=1}^{m} c^{j-1}(c+1)^{m-j} &= (c+1)^{m-1}\sum_{j=1}^{m} \left(\frac{c}{c+1}\right)^{j-1} \\
&= (c+1)^{m-1}\frac{1-r^m}{1-r}
\end{align*}
where $r = c/(c+1)$.
Now fix nodes $s$ and $t$ that are a distance $l$ apart in the clockwise direction from $s$ to $t$. Lets call the path of length $l$ between $s$ and $t$, $\calP_1$, and that of length $m-l$, $\calP_2$. A successful transaction between $s$ and $t$ along $\calP_1$ requires edges in $\calP_1$ to have capacity in the counterclockwise direction, whereas that along $\calP_2$ requires edges in $\calP_2$ to have capacity in the clockwise direction. If the $j$th edge lies in $\calP_1$ then there is no path of length $l$ between $s$ and $t$. On the other hand, a path of length $m-l$ between $s$ and $t$ requires each of the $m-l$ edges to be in one of $c$ states. Thus, the number of equivalence classes that only have a path of length $m-l$ between $s$ and $t$ is
\begin{align*}\label{eq:path-m-l}
\sum_{j=1}^{l} c^{j-1}(c+1)^{l-j}c^{m-l} &= (c+1)^{l-1}c^{m-l} \sum_{j=1}^{l}\left(\frac{c}{c+1}\right)^{j-1}  \\
&=  c^{m-l}(c+1)^{l-1}\frac{1-r^{l}}{1-r}
\end{align*}
where $r=c/(c+1)$.
However, if the $j$th edge is in $\calP_2$, it implies that all edges in $\calP_1$ have positive capacity in the counterclockwise direction. Thus, the number of equivalence classes that have a path of length $l$ between $s$ and $t$ is
\begin{align*}
\sum_{j=l+1}^{m} c^{j-1}(c+1)^{m-j} &= c^l(c+1)^{m-l-1} \sum_{j=1}^{m-l} \left(\frac{c}{c+1}\right)^{j-1} \\
&= c^l(c+1)^{m-l-1} \frac{1-r^{m-l}}{1-r}
\end{align*}
Note that the above equation includes equivalence classes that have capacity along both $\calP_1$ and $\calP_2$ as well as those that have capacity along only $\calP_1$.

Thus, the probability that a transaction between nodes separated by $l$ edges will succeed is given by
\begin{align*}
\frac{c^l(c+1)^{m-l-1}(1-r^{m-l}) + c^{m-l}(c+1)^{l-1}(1-r^l)}{(c+1)^{m-1}(1-r^m)} \\
= \frac{r^l(1-r^{m-l}) + r^{m-l}(1-r^l)}{1-r^m} = \frac{r^l + r^{m-l} -2r^m}{1-r^m}
\end{align*}
Taking expectation over the length $l$ of the transaction path gives the result.
\end{proof}
As a corollary, we can show that the steady-state success probability in cycle graphs under a uniform transaction regime is $\T(c/n)$.
\begin{cor}
Let $G^\circ_{n;c}$ be a cycle graph of $n > 2$ nodes where each edge has a total credit capacity of $c$.  If $\calM(G^\circ_{n;c}, \L)$ is an ergodic Markov chain induced by a uniform transaction matrix $\L$ over nodes in $G^\circ_{n;c}$, then the steady-state transaction success probability, $\prob_s(G^\circ_{n;c})$, is $\T(c/n)$.
\end{cor}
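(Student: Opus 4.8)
The plan is to substitute the uniform transaction regime into the formula of Theorem~\ref{thm:cycles} and evaluate the resulting geometric series in closed form. First I would observe that under a uniform $\L$ every ordered pair $(s,t)$ with $s \neq t$ is equally likely, and for a fixed $s$ the clockwise distance to $t$ takes each value in $\{1,\dots,n-1\}$ exactly once; since the summand $(r^l+r^{n-l}-2r^n)/(1-r^n)$ is invariant under $l \mapsto n-l$, the expectation $\expect_l$ is simply the average over $l \in \{1,\dots,n-1\}$, so
\[
\prob_s(G^\circ_{n;c}) = \frac{1}{(n-1)(1-r^n)}\sum_{l=1}^{n-1}\left(r^l + r^{n-l} - 2r^n\right).
\]

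Next I would evaluate the sum. Both $\sum_{l=1}^{n-1} r^l$ and $\sum_{l=1}^{n-1} r^{n-l}$ equal $(r - r^n)/(1-r)$, while $\sum_{l=1}^{n-1} 2r^n = 2(n-1)r^n$; using $r = c/(c+1)$, hence $1-r = 1/(c+1)$ and $r/(1-r) = c$, this collapses to the closed form
\[
\prob_s(G^\circ_{n;c}) = \frac{2c\,(1-r^{n-1}) - 2(n-1)\,r^n}{(n-1)\,(1-r^n)}.
\]

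Finally I would extract the asymptotics, working in the regime $c = O(n)$ (when $c = \omega(n)$ the expression tends to $1$, so the claim $\T(c/n)$ is meaningful precisely there, and for constant $c$ the estimate is immediate from the closed form). Writing $r = 1 - 1/(c+1)$ gives $r^n \le e^{-n/(c+1)}$, so for $c \le n-1$ we have $r^n \le 1/e$ and the denominator $(n-1)(1-r^n)$ is $\T(n)$. For the numerator, the correction terms $2(n-1)r^n$ and $2c\,r^{n-1}$ are lower order (for fixed $c$ they are $o(c)$, since exponential decay beats the linear factor), so the numerator is $\T(c)$ and hence $\prob_s(G^\circ_{n;c}) = \T(c/n)$.

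The one step needing a little care is the lower bound uniformly in $c$: keeping only the $l=1$ term of the series yields merely $\Omega(1/n)$, so to recover the factor $c$ one must retain $\T(n)$ of the nonnegative terms $r^l(1-r^{n-l})/(1-r^n)$ --- say all $l \le n/2$, where $1-r^{n-l}$ is bounded below by a constant once $c \le n-1$ --- and then sum $\sum_{l \le n/2} r^l = c\,(1 - r^{\lfloor n/2 \rfloor}) = \T(c)$ via $r/(1-r) = c$ once more. I do not expect any genuine obstacle beyond this bookkeeping.
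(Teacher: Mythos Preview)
Your approach is essentially the same as the paper's: both substitute the uniform distribution into the formula of Theorem~\ref{thm:cycles} and evaluate the resulting geometric series to extract the $\Theta(c/n)$ asymptotics. If anything you are more careful than the paper, which only writes out the upper bound explicitly (dropping the $-2r^n$ term and summing $\sum r^l$) and leaves the matching lower bound implicit, whereas you derive the exact closed form and separately justify the $\Omega(c/n)$ direction.
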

\begin{proof}
 $\prob_s(G^\circ_{n;c})$ under a symmetric transaction regime is given by
\[
\prob_s(G^\circ) = \expect_l \frac{r^l + r^{n-l} - 2r^n}{1-r^n} \text{ (where $r = c/(c+1)$)}
\]
Therefore, under a uniform transaction regime, as $n\rightarrow \infty$
\begin{align*}
\lim_{n\rightarrow \infty} &\prob_s(G^{\circ})  \le \lim_{n\rightarrow \infty} \sum_{l=1}^{n/2} \frac{2n}{n(n-1)}\frac{r^l + r^{n-l} - 2r^n}{1-r^n}\\
 &\le \lim_{n\rightarrow \infty} \frac{2}{(n-1)(1-r^n)}\sum_{l=1}^{n/2} \left(r^l + r^{n-l}\right)\\
 &=  \lim_{n\rightarrow \infty} \frac{2}{(n-1)(1-r^n)}\left(\frac{r(1-r^{n/2}) + r^{n/2}(1-r^{n/2})}{1-r}\right)\\
 &= \frac{2r}{(n-1)(1-r)} = \T(c/n)
\end{align*}
\end{proof}

\section{Failure Probability in Complete Graphs}\label{app:complete-graph}
\begin{proof}[Proof of Theorem~\ref{thm:complete-graph}]
We know that there is a bijection between equivalence classes of a graph and labeled forests \cite{gioan:cycle-cocycle, kleitman:forests-scorevectors}, and the number of forests, $a_{n,c}$, in $K_{n;c}$ converges to \cite{kleitman:forests-scorevectors}
\begin{align}\label{complete-graph:eq1}
a_{n,c} \sim e^{1/2c}c^{n-1}n^{n-2}
\end{align}
where $f_n \sim g_n$ means $\lim_{n\rightarrow \infty} f_n/g_n = 1$. 
By Cayley's formula there are $n^{n-2}$ trees on $n$ labeled nodes, so $a_{n,1} \ge n^{n-2}$. Any forest has at most $n-1$ edges, so with $c$ labeled copies of each edge $c^{n-1}n^{n-2} \le a_{n,c} \le c^{n-1}a_{n,1}$.  Equation \eqref{complete-graph:eq1} implies that there exists a constant $n_0$ such that for all $n\ge n_0$, $a_{n,1} \le 2 e^{1/2}n^{n-2}$.  Combining these equations, for all $n \ge n_0$,
\[
c^{n-1}n^{n-2} \le a_{n,c} \le 2e^{1/2}c^{n-1}n^{n-2}
\]

Fix two nodes $u$ and $v$ in $K_{n;c}$, and consider a cut between $u$ and $v$ such that the partition containing $u$ has $k$ nodes and that containing $v$ has $n-k$ nodes. Assume that no payment can be made from $u$'s partition to $v$'s partition. Then, the number of equivalence classes containing this cut is at most $a_{k,c}a_{n-k,c}$. Summing over all cuts separating $u$ and $v$ in $K_{n;c}$, we upper bound the transaction failure probability between nodes $u$ and $v$ as
\begin{align*}
\frac{1}{a_{n,c}} \sum_{k=1}^{n-1}\binom{n-2}{k-1} a_{k,c}a_{n-k,c} &=  2\sum_{k=1}^{n_0-1} \binom{n-2}{k-1} \frac{a_{k,c}a_{n-k,c}}{a_{n,c}} + \sum_{k=n_0}^{n-n_0} \binom{n-2}{k-1} \frac{a_{k,c}a_{n-k,c}}{a_{n,c}}
\end{align*}

Bounding the first term,
\begin{align*}
\sum_{k=1}^{n_0-1} \binom{n-2}{k-1} \frac{a_{k,c}a_{n-k,c}}{a_{n,c}} &=  O\left(\sum_{k=1}^{n_0-1} \frac{n^{k-1}}{(k-1)!}\frac{(c^{k-1}a_{k,1})(c^{n-k-1}(n-k)^{n-k-2})}{c^{n-1}n^{n-2}}\right)\\
&= O\left( \frac{1}{nc} \sum_{k=1}^{n_0-1} \frac{a_{k,1}}{(k-1)!}\left(\frac{n-k}{n} \right)^{n-k-2} \right) = O\left(\frac{1}{nc}\right)
\end{align*}
using that $a_{k,1} = O(1)$ when $k \le n_0$. For the second term,
\begin{align*}
& \sum_{k=n_0}^{n-n_0} \binom{n-2}{k-1} \frac{a_{k,c}a_{n-k,c}}{a_{n,c}} \\
= & O\left(\sum_{k=n_0}^{n-n_0} \binom{n-2}{k-1}\frac{(c^{k-1}k^{k-2})(c^{n-k-1}(n-k)^{n-k-2})}{c^{n-1}n^{n-2}}\right) \\
= & O\left(\frac{1}{nc}\sum_{k=n_0}^{n-n_0} \binom{n-2}{k-1} \frac{k^{k-2}(n-k)^{n-k-2}}{n^{n-3}}\right) \\
= & O\left(\frac{1}{nc}\sum_{k=1}^{n-1} \frac{n!}{n(n-1)}\frac{k(n-k)}{k!(n-k)!}\frac{k^k(n-k)^{n-k}}{n^n}\frac{n^3}{k^2(n-k)^2}\right)
\end{align*}
\begin{equation}
\label{eq1} =  O\left(\frac{1}{nc}\sum_{k=1}^{n-1} \frac{n}{(n-k)k} \frac{n!}{n^n}\frac{k^k}{k!}\frac{(n-k)^{n-k}}{(n-k)!}\right)
\end{equation}

Instead of using the standard Stirling's approximation, we will use the more precise \emph{Stirling Series}:
\[
n! = \sqrt{2\pi n}\left(\frac{n}{e}\right)^n\left(1 + \frac1{12n} + \frac1{288n^2} - \frac{139}{51840n^3} \pm \cdots\right)
\]
Therefore, $n!$ can be written as
\[
n! = \sqrt{2\pi n}\left(\frac{n}{e}\right)^nc_n  \text{, where } 1 \le c_n \le 2 \text{ when } n \ge 1
\]
Substituting this into \eqref{eq1} and simplifying, we get
\[
\frac{1}{nc}\sum_{k=1}^{n-1} \frac{n}{(n-k)k} \frac{n!}{n^n}\frac{k^k}{k!}\frac{(n-k)^{n-k}}{(n-k)!}  =  O\left(\frac{1}{nc}\sum_{k=1}^{n-1} \left(\frac{n}{(n-k)k}\right)^{3/2}\right)
\]
Either $k$ or $n-k$ is at least $n/2$, so
\[
\sum_{k=1}^{n-1} \left(\frac{n}{(n-k)k}\right)^{3/2} \le 2\sum_{k=1}^{(n-1)/2} \left(\frac{2}{k}\right)^{3/2} = O(1)
\]
Summarizing,
\[
\frac{1}{a_{n,c}} \sum_{k=1}^{n-1}\binom{n-2}{k-1} a_{k,c}a_{n-k,c} = O\left(\frac{1}{nc}\right)
\]
Also, since the node bankruptcy probability in $K_{n;c}$ is $\T(1/nc)$, the transaction failure probability is $\Omega(1/nc)$. This completes the proof.
\end{proof}

\end{document}